\newcommand{\st}{\;|\;}
\newcommand{\N}{\ensuremath{\mathbb{N}}\xspace}
\newcommand{\NP}{\ensuremath{\mathsf{NP}}\xspace}
\newcommand{\SAT}{\ensuremath{\mathsf{SAT}}\xspace}
\DeclareMathOperator*{\E}{\mathbb{E}}
\newcommand{\ccomm}{C_\text{comm}}
\newcommand{\cbridge}{C_\text{bridge}}
\newcommand{\cadded}{C_\text{added}}
\begin{document}

\title{On the Hardness of SAT with Community Structure}

\author{Nathan Mull \and Daniel J. Fremont \and Sanjit A. Seshia}
\institute{University of California, Berkeley}

\maketitle

\begin{abstract}
Recent attempts to explain the effectiveness of Boolean satisfiability (\SAT) solvers based on conflict-driven clause learning (CDCL) on large industrial benchmarks have focused on the concept of community structure.
Specifically, industrial benchmarks have been empirically found to have good community structure, and experiments seem to show a correlation between such structure and the efficiency of CDCL.
However, in this paper we establish hardness results suggesting that community structure is not sufficient to explain the success of CDCL in practice.
First, we formally characterize a property shared by a wide class of metrics capturing community structure, including ``modularity''.
Next, we show that the {\SAT} instances with good community structure according to any metric with this property are still NP-hard.
Finally, we study a class of random instances generated from the``pseudo-industrial" \emph{community attachment} model of Gir\'aldez-Cru and Levy.
We prove that, with high probability, instances from this model that have relatively few communities but are still highly modular require exponentially long resolution proofs and so are hard for CDCL.
We also present experimental evidence that our result continues to hold for instances with many more communities.
This indicates that actual industrial instances easily solved by CDCL may have some other relevant structure not captured by the community attachment model.
\end{abstract}

\section{Introduction}

Over the last 20 years Boolean satisfiability (\SAT) solvers have become widely used tools for solving problems in many domains~\cite{ms-applications,eda-applications}.
This is largely the result of the \emph{conflict-driven clause learning} (CDCL) paradigm, introduced in the mid-1990s \cite{grasp,bayardo-schrag,sh-cdcl} and much developed since then.
This success of \SAT solving in practice is perhaps surprising in light of the 
{\NP}-hardness of {\SAT}, which is widely interpreted to mean that the problem admits no efficient algorithms.
This has led to a line of research trying to answer the basic question: \emph{why does CDCL perform so well in practice?}
In other words, what is it about industrial \SAT instances that allows them to seemingly avoid the worst-case behavior of CDCL?

One possible explanation is that {\SAT} is significantly easier \emph{on average} than in the worst case.
For algorithms like CDCL that are based on resolution (which we will discuss in more detail below), 
this was ruled out by the discovery that random instances require exponentially-long resolution proofs \cite{exponential-resolution}.
Of course, industrial instances are generally highly \emph{non-random}, so another possibility is that such instances tend to fall into a tractable class of problems.
For example, {\SAT} is known to be fixed-parameter tractable with respect to various natural parameters such as treewidth and clique-width \cite{sh-fpt}.
However, it is unclear whether these parameters are always small in practice.
Moreover, if the goal is to analyze the success of CDCL, the existence of \emph{different} algorithms that take advantage of small (say) treewidth 
is not relevant: what matters is whether it correlates with CDCL performance, and in fact there is evidence against this \cite{treewidth-industrial}.

Parameters more relevant to CDCL are the sizes of \emph{backdoors} \cite{backdoors} and \emph{backbones} \cite{backbones}.
In essence, a backdoor is a set of variables which if assigned cause the instance to become solvable by simplification with no further search, while the backbone is the set of variables which can only be assigned one way in every satisfying assignment.
Correlations between the sizes of backdoors and backbones and the performance of CDCL have been observed empirically, and some ``structured'' instances do seem to have small backdoors \cite{bb-and-bd,empirical-bd}.
Unfortunately, these experiments have all been limited by the computational difficulty of estimating backdoor sizes, 
and it is unclear whether they are representative of a majority of large industrial benchmarks.

None of these ideas have adequately covered the whole variety of industrial instances, leaving a significant gap between our theoretical understanding of when {\SAT} is easy and the reality of CDCL's effectiveness in practice.
Of course, despite the intuition that industrial instances have some common underlying structure that explains why CDCL is so effective on them, it is probable that no single explanation suffices.
There are particular types of industrial instances that are easy for a specific, known reason that does not apply to all other industrial instances \cite{feature-models}.
However, it is still worthwhile to seek general explanations covering as many different types of instances as possible.

One recent approach has focused on the concept of \emph{community structure}, as measured by \emph{modularity} \cite{modularity}.
The variables of an instance with ``good community structure'' (high modularity) can be partitioned into relatively small sets such that few clauses span multiple sets.
It has been found that industrial instances exhibit significantly better community structure than random instances \cite{community-sat}, and that community structure does empirically correlate with CDCL performance \cite{newsham,giraldez2015modularity}.
This makes community structure a plausible candidate for the ``hidden structure'' underlying the effectiveness of CDCL on industrial benchmarks.
In fact, community structure has been used as the basis for a model of random ``pseudo-industrial'' instances, the \emph{community attachment} model \cite{giraldez2015modularity}.
It has parameters controlling the number of communities and the density of their interconnections, allowing it to better reflect the properties of industrial benchmarks.

However, as yet there has been little theoretical analysis connecting community structure to CDCL performance.
The only relevant work we are aware of is that of Ganian and Szeider~\cite{ganian-szeider}, who observe that SAT remains \NP-hard for highly modular instances.
They also give a tractability result for a parameter ``h-modularity" inspired by community structure.
However, this parameter is significantly different from the usual modularity, there is no evidence that it is small in industrial benchmarks, and the tractability result is via an algorithm completely different from CDCL.

In this paper, we extend the connection between community structure and worst-case complexity, and establish the first theoretical result on the average-case performance of CDCL on modular instances.
Specifically, we:
\begin{itemize}
 \item Define the \emph{polynomial clique metrics} (PCMs), a broad class of graph metrics that includes modularity and other popular measures of graph clustering (Section \ref{sec:pcm-definition}).
 \item Show that the set of {\SAT} instances which have ``good community structure'' according to any PCM is still {\NP}-hard (Section \ref{sec:pcm-hardness}).
 \item Prove that on random unsatisfiable instances from the community attachment model that have fewer than $\Theta(n^{1/10})$ communities but can still be highly modular, CDCL takes exponential time with high probability (Section \ref{sec:average-case}).
 \item Give experimental evidence that our result continues to hold for instances with $\Theta(n^\alpha)$ communities for $\alpha \gg 1/10$ (Section \ref{sec:discussion}).
\end{itemize}
Based on these results, we suggest that community structure by itself may not be an adequate explanation for the effectiveness of CDCL in practice.
We begin in Section \ref{sec:background} with background on {\SAT}, CDCL, and community structure both generally and as recently applied to {\SAT}, and conclude in Section \ref{sec:discussion} with a discussion of our results and some directions for future work.

\section{Background} \label{sec:background}

\subsection{SAT}

The \emph{Boolean satisfiability} or {\SAT} problem is to decide, given a Boolean formula $\varphi(\vec{x})$ over a vector of variables $\vec{x}$, whether or not there is a \emph{satisfying assignment} to $\vec{x}$ that makes the formula true.
In this paper, we make the common assumption that the formula $\varphi$ is in \emph{conjunctive normal form} (CNF): it is a conjunction $\psi_1 \land \dots \land \psi_m$ of \emph{clauses}, where each clause $\psi_i$ is a disjunction $\ell_{i1} \lor \dots \lor \ell_{ik}$.
Here each $\ell_{ij}$ is a \emph{literal}: either a variable from $\vec{x}$ or the negation of such a variable.
We also assume that every clause has the same length $k$.
A formula $\varphi$ satisfying these conditions is called a \emph{$k$-\SAT} formula.

Given a partial assignment $\rho$ to some of the variables $\vec{x}$, the \emph{restriction} $\varphi \lceil_\rho$ of $\varphi(\vec{x})$ to $\rho$ is the formula obtained from $\varphi$ by removing all clauses satisfied by $\rho$ and all literals falsified by $\rho$.
We can apply a restriction to any list of clauses analogously.
The \emph{size} of the restriction is the number of variables assigned by $\rho$.

\subsection{Resolution and CDCL}

\emph{Resolution} \cite{resolution} is a fundamental proof system that underlies modern {\SAT} solving algorithms.
It consists of a single rule stating that from clauses $(v \lor \vec{w})$ and $(\lnot v \lor \vec{u})$ that have occurrences of $v$ with opposite polarities, we may infer the clause $(\vec{w} \lor \vec{u})$.
As we will see in a moment, the importance of resolution for our purposes is that in order to establish that a formula $\varphi$ is unsatisfiable, {\SAT} solvers based on CDCL implicitly construct a \emph{resolution refutation} of $\varphi$: a derivation of a contradiction (the empty clause) from $\varphi$ using the resolution rule.
This effectively means that the runtime of such a solver cannot be shorter than the length of the shortest such refutation.

To make this precise we need to define what we mean by CDCL.
\emph{Conflict-driven clause learning} \cite{sh-cdcl} describes a class of algorithms that extend the Davis--Putnam--Logemann--Loveland (DPLL) algorithm \cite{dpll}.
DPLL is a classical search algorithm that assigns each variable in turn, backtracking if a clause is falsified by the current assignment.
If at any point there is a clause with only a single unassigned variable, then that variable can immediately be given the assignment which satisfies the clause --- a rule called \emph{unit propagation}.
If we eventually assign every variable, then we have found a satisfying assignment; otherwise, the search will backtrack all the way to the top level, every possible assignment will have been tried, and the formula is unsatisfiable.

CDCL-type algorithms augment this procedure by \emph{learning} at every backtrack point a new clause that summarizes the reason why the current partial assignment falsifies the formula \cite{grasp,bayardo-schrag}.
This \emph{conflict clause} $C$ is derived by resolving the falsified clause $F$ with one or more other clauses that were used to assign variables in $F$ by unit propagation.
As a result $C$ is always derivable from the original formula $\varphi$ by resolution, and writing out all clauses learned by CDCL when $\varphi$ is unsatisfiable gives a resolution refutation of $\varphi$ \cite{cdcl-sim}.
So the shortest such refutation gives a lower bound for the runtime of CDCL.
This is true regardless of the heuristics used by the particular CDCL variant to decide which variable to assign and its polarity, how exactly to derive the conflict clause, and when to restart search from the beginning (see \cite{cdcl-sim} for a more precise statement).
We also note that pre- or inprocessing techniques that add no clauses (e.g. blocked clause elimination \cite{bce}) or add only clauses derived via resolution (e.g. variable elimination \cite{var-elim}) will not affect the lower bound.

\subsection{Random SAT Instances}

To study the performance of CDCL on ``typical'' instances, we use the framework of \emph{average-case complexity}, which analyzes the efficiency of algorithms on \emph{random} instances drawn from a particular distribution.
We will be interested in complexity lower bounds that hold for almost all sufficiently large instances:
\begin{definition}
An event $X$ occurs \textbf{with high probability} in terms of $n$ if $\Pr[X] \rightarrow 1$ as $n \rightarrow \infty$.
\end{definition}
For example, if flipping $n$ fair coins, with high probability at least $49\%$ will be heads.

Perhaps the simplest distribution over {\SAT} instances arises from fixing the numbers of variables, clauses, and variables per clause, and then sampling uniformly:
\begin{definition}
$F_k(n, m)$ is the uniform distribution over $k$-CNF formulas with $n$ variables and $m$ clauses.
\end{definition}
This \emph{random $k$-{\SAT} model} has been widely studied, and is known to be difficult on average for CDCL (for clause-variable ratios in a certain range) by the resolution lower bound discussed above: with high probability, a random unsatisfiable instance has only exponentially long resolution refutations \cite{beame}.
As we will discuss shortly, our work extends this result to a more recent random {\SAT} model that favors instances that are ``pseudo-industrial'' in the sense of having good community structure.

\subsection{Community Structure}

The notion of community structure has a long history in many fields \cite{community-history}.
The essential idea is that graphs with ``good community structure'' can be broken into relatively small pieces, \emph{communities}, that are densely connected internally but only sparsely connected to each other.
There are a number of metrics which have been proposed to make this notion formal, of which one of the most popular is \emph{modularity} \cite{modularity}.
We consider unweighted graphs as weighted graphs with all weights $1$.
\begin{definition}
Let $G = (V, E)$ and let $\delta = \{C_1, \dots, C_n\}$ be a vertex partition.
Let $\deg v$ be the degree of $v$, and $w(x,y)$ be the weight of the edge $(x,y)$ or zero if there is no such edge.
The \textbf{modularity} (or $Q$-value) of $G$ is
\[
Q = \max_\delta \sum_{C \in \delta} \left[ \frac{\sum_{x, y \in C} w(x, y)}{\sum_{x, y \in V} w(x, y)} - \left(\frac{\sum_{x \in C} \deg x}{\sum_{x \in V} \deg x} \right)^2 \right].
\]
\end{definition}

While work on community structure in {\SAT} instances has focused on modularity, there are several competing metrics that have been used to measure community structure in other domains.
In Appendix \ref{sec:other-metrics} of this paper, we consider four: silhouette index, conductance, coverage, and performance \cite{best-mod}.

Finally, we introduce notation for two graphs that will be useful in this paper: $K_n$, the complete graph on $n$ vertices, and $K_n^m$, consisting of $m$ disjoint copies of $K_n$. 

\subsection{SAT and Community Structure}

Recent work on the community structure of {\SAT} instances begins by associating to each instance its \emph{variable incidence graph} (also known as the \emph{primal graph}).
\begin{definition}
Let $\varphi$ be a CNF formula.
The \textbf{variable incidence graph} (\textbf{VIG}) of $\varphi$ is the graph $G_\varphi = (V, E)$ where $V$ is the set of all variables occurring in $\varphi$ and $E$ is the set $\{(v_1, v_2) : v_1, v_2 \in V$ and they appear together in some clause of $\varphi\}$.
\end{definition}
Some works use a \emph{weighted} version of this graph with $w(v_1,v_2) = \sum_{cl} \left[ 1/\binom{|cl|}{2} \right]$, where the sum is over all clauses in which both $v_1$ and $v_2$ appear \cite{community-sat,giraldez2015modularity}.
This ensures that each clause contributes an equal amount to the total weight of the graph regardless of its length.
Our results apply to both the weighted and unweighted versions.

Obviously, the graph $G_\varphi$ does not preserve all information about the instance $\varphi$.
In particular, the polarities of the literals are ignored.
But the graph does capture significant structural information: for example, if the graph has two connected components on variables $\vec{x}$ and $\vec{y}$ then the formula $\varphi(\vec{x},\vec{y})$ can be split into $\psi(\vec{x}) \land \chi(\vec{y})$ and each subformula solved independently.
In practice a perfect decomposition is rare, but one into \emph{almost} independent parts is more plausible.
This is exactly the idea of community structure, and leads us naturally to consider applying modularity to {\SAT} instances.
\begin{definition}
The \textbf{modularity} of a formula $\varphi$ is the modularity of $G_\varphi$.
\end{definition}

As was mentioned earlier, it has been found empirically that modularity correlates with CDCL performance \cite{newsham}.
This is a claim about the \emph{average} behavior of CDCL over a wide variety of industrial benchmarks, not about its behavior on any specific instance.
Thus it is naturally formalized in the average-case complexity framework discussed above, by giving a distribution that favors instances that are ``industrial'' in character.
One such proposal, based on the idea that the key commonality of industrial instances is their good community structure, is the \emph{community attachment} model of Gir\'aldez-Cru and Levy \cite{giraldez2015modularity}.
In addition to the numbers of variables and clauses, this model has parameters controlling the number of communities and the (expected) fraction of clauses that lie within a single community instead of spanning multiple communities.
\begin{definition}
Let $N$ be a set of $n$ variables. A \textbf{partition of
$N$ into $c$ communities} is a partition $S = \{S_1, \dots, S_c\}$ of $N$ such
that $|S_i| = n / c$. A clause
is \textbf{within a community} if it contains only variables from a single
$S_i$.
A \textbf{bridge clause} is a clause whose variables are all in different communities.
\end{definition}
\begin{definition}[\cite{giraldez2015modularity}] \textbf{(Community Attachment Model)}
Let $n, m, c, k \in \N$ and $p \in [0, 1]$ such that $c$ divides $n$ and $2 \le k \leq c \leq n /
k$.  Then $F_k(n, m, c, p)$ is the distribution over $k$-CNF formulas with $n$
variables and $m$ clauses given by the following procedure: first, choose a random
partition of $n$ variables into $c$ communities. With probability $p$, choose a
clause uniformly among clauses within a community, and otherwise choose uniformly among bridge clauses.
Generate $m$ clauses independently in this way.
\end{definition}
\begin{remark}
We define bridge clauses in a way that matches the community attachment model, but as we will discuss below our results also hold for a modified model where a bridge clause is any clause not within a single community.
\end{remark}

Like the random $k$-{\SAT} model $F_k(n,m)$, the model $F_k(n,m,c,p)$ ranges over $k$-CNF formulas with $n$ variables and $m$ clauses, and each clause is chosen independently of the others.
However, in this model the clauses are of two different types: those lying entirely within a community, and those spread across $k$ different communities.
The probability $p$ controls how likely a clause is to be of the first type versus the second.

The idea behind this model is that by picking $c$ and $p$ appropriately, one is likely to obtain instances that decompose into loosely-connected communities, as has been observed in actual industrial instances.
More precisely, the expected modularity of an instance drawn from $F_k(n,m,c,p)$ is lower bounded by $p - (1/c)$, so that for nontrivial $c$ highly modular instances can be generated by setting $p$ large enough \cite{giraldez2015modularity}.
Furthermore, Gir\'aldez-Cru and Levy find experimentally that high-modularity instances generated with this model are solved more quickly by CDCL than by look-ahead solvers, and the reverse is true for low-modularity instances \cite{giraldez2015modularity}.
This parallels the same observation for industrial instances versus random instances.
Thus, they conclude, $F_k(n,m,c,p)$ is a more realistic model of industrial instances than the random $k$-{\SAT} model $F_k(n,m)$.

\section{Worst-Case Hardness} \label{sec:worst-case}

In this section, we propose a simple class of graph metrics that we argue should include most metrics quantifying community structure.
We show that modularity is in fact within the class, as are several other popular graph clustering metrics.
However, we demonstrate that the set of $\SAT$ instances that have ``good community structure'' according to any metric in the class is \NP-hard.
Therefore, no such metric can be a guaranteed indicator of the difficulty of a $\SAT$ instance.

\subsection{A Class of ``Modularity-like'' Graph Metrics} \label{sec:pcm-definition}

We begin by formalizing what we mean by a graph metric.
\begin{definition}
A \textbf{graph metric} is a function $m$ from weighted graphs to $[0,1]$.
Given $m$ and any $\epsilon \in [0,1]$,  $\SAT_{m,\epsilon}$ is the class of all \SAT instances $\varphi$ such that $m(G_\varphi) \geq 1 - \epsilon$.
\end{definition}
For example, if $m$ is modularity then $\SAT_{m,\epsilon}$ consists of the ``high modularity'' formulas, where ``high'' means any modularity above $1-\epsilon$.

In general we are interested in graph metrics that represent a notion of community structure, assigning larger values to graphs which have such a structure than those that do not.
For such a metric $m$, consider the following property:
\begin{definition}
A graph metric $m$ is a \textbf{polynomial clique metric} (\textbf{PCM}) if for all $\epsilon > 0$, there is a poly-time computable function $c : \N \rightarrow \N$ with at most polynomial growth and some $n_0 \in \N$ such that for all $n \ge n_0$, if $K$ is $K_n$ with any positive edge weights then $m(K^{c(n)}) \ge 1 - \epsilon$.
\end{definition}
\begin{remark}
If using the unweighted version of the variable incidence graph, our proofs will work using a relaxed definition that applies only to $K = K_n$ with unit weights.
\end{remark}
In essence, the definition states that for any (sufficiently large) size $n$, at most a polynomial number of copies of $K_n$ are needed to produce a graph that $m$ considers to have ``good community structure''.
This is a natural property for modularity-like metrics to have, since copies of $K_n$ are in some sense ideal communities: internally connected as much as possible, with no external edges.
Of course we would not consider a single copy of $K_n$ to have good community structure, so the definition of a PCM only requires that such structure be obtained for \emph{some} number of copies at most polynomial in $n$.

Next we demonstrate that the PCMs are a large class including modularity and several other popular clustering metrics.
While the other metrics have not been experimentally evaluated in the context of SAT, this still supports our claim that the PCM property is a natural one for metrics of community structure to have.
For lack of space, we defer the definitions and analysis of the metrics other than modularity to Appendix \ref{sec:other-metrics}.
\begin{theorem}
Modularity is a PCM.
\end{theorem}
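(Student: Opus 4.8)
The plan is to exhibit, for each $\epsilon > 0$, a single \emph{constant} function $c(\cdot)$ that works uniformly in $n$, and to lower-bound the modularity of $K^{c(n)}$ by evaluating the $Q$-value on the obvious partition rather than optimizing over all partitions. Concretely, I would set $c(n) := \lceil 1/\epsilon \rceil$ for every $n$; this is trivially poly-time computable (it does not even depend on $n$) and has polynomial — indeed constant — growth, so it meets the syntactic requirements in the definition of a PCM. It then remains only to check that $c(n)$ disjoint copies of an arbitrarily (positively) weighted $K_n$ have modularity at least $1 - \epsilon$, which is the single content-bearing estimate.

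For that estimate I would take $\delta^\star$ to be the partition of $K^{c(n)}$ whose blocks are exactly the vertex sets of the $c(n)$ disjoint copies of $K$, and compute the bracketed expression in the definition of $Q$ for this $\delta^\star$; since $Q$ is a maximum over partitions, this already gives a valid lower bound. The first term contributes exactly $1$, because every edge of $K^{c(n)}$ lies inside one of the copies, so $\sum_{C \in \delta^\star} \sum_{x,y \in C} w(x,y) = \sum_{x,y \in V} w(x,y)$. For the second term I would use that the $c(n)$ copies are mutually isomorphic with identical weights, so each block $C$ carries exactly the share $\sum_{x \in C} \deg x = (1/c(n)) \sum_{x \in V} \deg x$ of the total degree; hence $\sum_{C \in \delta^\star} (\,\cdot\,)^2 = c(n) \cdot (1/c(n))^2 = 1/c(n)$. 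Combining, $Q \ge 1 - 1/c(n) \ge 1 - \epsilon$, valid for every $n \ge 2$, so $n_0 = 2$ suffices ($K_1$ being excluded only because it has no edges).

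I do not expect a genuine obstacle here, only bookkeeping: one must fix the convention for the sums $\sum_{x,y}$ in the definition of modularity (ordered versus unordered pairs, and whether $w(x,x)$ is included), but whichever convention is chosen it cancels between numerator and denominator, leaving the computation unchanged. The one conceptual point worth stating explicitly is that the PCM definition permits $c(n)$ to be constant in $n$, so no growing number of cliques is needed; the same argument would go through verbatim with, say, $c(n) = n$, yielding a weaker-looking but equally valid conclusion.
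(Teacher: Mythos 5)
Your proof is correct and follows the same approach as the paper: take the partition of $K^{c}$ by copies, observe that all edges are internal so the first term sums to $1$ while the degree-fraction term sums to $1/c$, giving $Q \ge 1 - 1/c$, and then take $c$ to be a constant of order $1/\epsilon$. Your write-up is in fact marginally more careful than the paper's (you use $\lceil 1/\epsilon\rceil$ to ensure an integer and note explicitly that the first term equals $1$ because there are no cross-copy edges, rather than asserting each community contributes exactly $1/c$), but there is no substantive difference in the argument.
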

\begin{proof}
Fix any $\epsilon > 0$ and $n \geq 2$.
Let $K$ be $K_n$ with arbitrary positive edge weights, and let $G = K^c$.
Let $\delta$ be the vertex partition that groups two vertices iff they are in the same copy of $K$.
Then since each community is identical, and there are $c$ communities,
$\sum_{x, y \in C} w(x, y) / \sum_{x, y \in V} w(x, y) = 1/c$ and $\sum_{x \in C} \deg x / \sum_{x \in V} \deg x = 1/c$
for any $C \in \delta$.
Therefore, $Q(G) \geq c(1 / c - (1 / c)^2) = 1 - 1 / c$.
Putting $c = 1 / \epsilon$, we have $Q(G) \geq 1 - \epsilon$.
Since $c$ is $O(1)$ with respect to $n$, $Q$ is a PCM.
\qed
\end{proof}

\vspace{-1.7ex} 
\begin{restatable}{theorem}{otherMetricsPCM}
Silhouette index, conductance, coverage, and performance are PCMs.
\end{restatable}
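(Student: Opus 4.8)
The plan is to reuse the template from the proof that modularity is a PCM. Each of the four metrics is, like modularity, turned into a graph metric by optimizing a partition-quality functional over all vertex partitions of the input graph; hence, for a given $\epsilon$, it suffices to exhibit for $K^c$ (with a suitable $c$) a single vertex partition under which the functional is at least $1-\epsilon$. As in the modularity case, the natural candidate is the partition $\delta$ that places each of the $c$ copies of $K_n$ in its own community. Because $K^c$ is a disjoint union of cliques, under $\delta$ every edge is internal to a community and every pair of vertices in different communities is a non-edge lying in a different connected component --- the ``ideal'' clustering picture --- so I expect each functional to attain its maximum value $1$, a constant $c$ independent of $n$ to suffice, and the poly-time computability and polynomial growth of $c$ to be immediate.

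Carrying this out metric by metric, using the definitions from Appendix~\ref{sec:other-metrics}: \emph{coverage} (the weighted fraction of edges that are intra-community) is exactly $1$ under $\delta$ since there are no inter-community edges; \emph{performance} (which additionally rewards non-adjacent inter-community pairs) is exactly $1$ since under $\delta$ all intra-community pairs are edges and all inter-community pairs are non-edges; the \emph{silhouette index} is $1$ because for each vertex the mean distance to any other community is infinite (distinct cliques lie in different components) while the mean intra-community distance is finite and positive (edge weights are positive, so shortest-path distances within a clique are finite), hence every silhouette coefficient, and so their average, equals $1$; and \emph{conductance} in the inter-cluster sense is $1$ because the boundary cut $(C, V \setminus C)$ of each community carries no weight, so its conductance is $0$ and the metric, being one minus the largest such boundary conductance, equals $1$. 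In each case we may thus take $c$ to be the constant $2$, together with a small $n_0$ (e.g.\ $n_0 = 2$) to rule out degenerate clusters such as single-vertex cliques in the silhouette argument.

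The main obstacle is not computational but definitional: the word ``conductance'' names several distinct clustering measures, and the argument above assumes the inter-cluster flavor. If the appendix instead uses \emph{intra}-cluster conductance, then the natural partition of $K^c$ gives only $\phi(K_n) \to 1/2$, so I would instead argue via a different admissible partition (e.g.\ one refining each clique into internally well-connected pieces, or appealing to the degenerate-cluster convention) or let $c$ grow with $n$ --- in any event still only polynomially, so the PCM property is unaffected. Apart from pinning down exactly which conductance variant (and which edge-case conventions) the appendix adopts, the proof is routine: match each definition to the disjoint-clique configuration and verify that the ``ideal clustering'' realizes the value $1$.
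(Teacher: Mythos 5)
Your argument is correct, but it takes a visibly different (and arguably cleaner) route than the paper's. The appendix proof sketch takes $c = 1$, i.e.\ a single copy of $K_n$, together with the \emph{trivial} partition placing all vertices in one community, and then reads off value $1$ for each metric. You instead take $c \ge 2$ and use the partition $\delta$ that assigns each clique its own community. The paper's choice is shorter but leans silently on degenerate-case conventions: with a single community, the minimization in $b(v)$ for silhouette is over an empty set, and in the conductance term the denominator $\min\bigl(r(C), r(V\setminus C)\bigr)$ is $\min\bigl(r(V), 0\bigr) = 0$, giving a $0/0$ expression. Your per-clique partition sidesteps both of these: $b(v)$ is a genuine minimum over a nonempty set (equal to $\infty$, handled by the paper's stated convention $\infty/\infty = 1$), and the conductance denominators $r(C)$ and $r(V\setminus C)$ are both strictly positive. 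Your treatment of coverage and performance is a routine recount of pairs and matches the paper's conclusion. What the paper's route buys is brevity; what yours buys is avoidance of edge-case conventions and a partition that actually exhibits nontrivial community structure rather than the degenerate single-block clustering.

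Your caution about conductance is also well placed, and in fact applies to the paper's own sketch: as written, the appendix formula is $1 - \max_\delta[\cdot]$ where $[\cdot]$ is an average boundary conductance, i.e.\ a quantity that \emph{penalizes} bad partitions, so exhibiting one good $\delta$ does not directly bound a $\max_\delta$. Either the maximization should be read as $\max_\delta\bigl(1 - \frac{1}{|\delta|}\sum_C \phi(C)\bigr)$ (equivalently $1 - \min_\delta[\cdot]$), which is the standard convention and makes both your argument and the paper's go through, or else the definition needs a small correction. This is a presentational wrinkle in the appendix rather than a flaw unique to your proposal.
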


\subsection{Hardness of PCM-Modular Instances} \label{sec:pcm-hardness}

Now we show that the $\SAT$ instances which have ``good community structure'' according to a PCM are no easier in the worst case than any other instance.
The PCMs thus form a wide class of metrics which cannot be used as a guaranteed indicator of the difficulty of a $\SAT$ instance.
Our reduction can be viewed as a variation of that suggested by Ganian and Szeider~\cite{ganian-szeider} to show \NP-hardness in the specific case of modularity.
\begin{theorem}
For any PCM $m$, the class $\SAT_{m,\epsilon}$ is \NP-hard for all $\epsilon > 0$.
\end{theorem}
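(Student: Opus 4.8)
The plan is to give a polynomial-time reduction from $k$-\SAT (for $k \ge 3$, which is \NP-hard) to the satisfiability problem restricted to instances lying in $\SAT_{m,\epsilon}$. Given a $k$-CNF formula $\varphi$ on variables $x_1,\dots,x_n$, I will build a formula $\varphi'$ whose variable incidence graph is exactly a disjoint union of identical copies of a complete graph, so that the PCM property forces $m(G_{\varphi'}) \ge 1-\epsilon$, while $\varphi'$ is satisfiable iff $\varphi$ is.

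First, fix $\epsilon > 0$ and apply the PCM property to obtain the poly-time computable, polynomially-bounded function $c : \N \to \N$ and the threshold $n_0$ such that $m(K^{c(N)}) \ge 1-\epsilon$ for every $N \ge n_0$ and every $K$ that is $K_N$ with positive edge weights. Set $N = \max(n, k, n_0)$ and adjoin $N - n$ fresh dummy variables to $x_1,\dots,x_n$. Next, \emph{complete} $\varphi$ to a clique: for every pair of these $N$ variables that do not already co-occur in a clause of $\varphi$, add a tautological clause of length $k$ on the $N$ variables that contains that pair together with a complementary pair of literals (which makes the clause satisfied by every assignment, hence harmless for satisfiability) and, if $k > 3$, arbitrary filler literals from the same $N$ variables. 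Call the resulting formula $\varphi_1'$; its VIG is $K_N$ with a fixed assignment of positive edge weights, under either the weighted or the unweighted convention. Finally, take $c(N) - 1$ further renamed copies of $\varphi_1'$ on pairwise disjoint fresh variable sets, and let $\varphi'$ be the conjunction of all $c(N)$ copies.

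Three things then need to be checked, and none should be difficult. \textbf{Correctness}: each of the $c(N)$ conjuncts is a renaming of $\varphi$ together with tautological clauses, and the conjuncts share no variables, so $\varphi'$ is satisfiable iff $\varphi$ is. \textbf{Metric value}: by construction $G_{\varphi'}$ is $K_N^{c(N)}$ in which every copy carries the \emph{same} positive weighting, so the PCM property yields $m(G_{\varphi'}) \ge 1-\epsilon$ and hence $\varphi' \in \SAT_{m,\epsilon}$. \textbf{Efficiency}: $N = \Theta(n)$ for large $n$, each of the $c(N)$ copies contributes the clauses of $\varphi$ plus at most $\binom{N}{2}$ padding clauses, and $c$ is polynomially bounded and poly-time computable, so $\varphi'$ has size polynomial in that of $\varphi$ and can be produced in polynomial time. (The finitely many inputs with $n < n_0$ can also simply be solved by brute force.)

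The point that requires care is the metric value: the PCM definition only guarantees a high value for a disjoint union of \emph{identical} weighted copies of $K_N$, so the reduction must produce exactly such a graph. This is what forces the two design choices above — completing $\varphi$'s VIG to a full clique via tautological clauses, and replicating $\varphi$ itself across all $c(N)$ communities so that every copy is the same weighted clique — rather than the more naive idea of attaching a separate clique gadget to a single copy of $\varphi$. Everything else (uniform clause length, and the weighted-versus-unweighted VIG conventions, the latter covered by the relaxed PCM definition in the remark) is routine bookkeeping.
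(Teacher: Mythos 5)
Your proof is correct and follows the same overall strategy as the paper's: make the variable incidence graph a complete graph $K_N$ via equisatisfiability-preserving padding clauses, then replicate the result into $c(N)$ disjoint identical copies to form $K_N^{c(N)}$, and invoke the PCM property. The only genuine difference is the gadget used to complete the VIG. The paper introduces a single fresh variable $x$ and adds \emph{all} clauses of the form $x \lor y \lor z$ for $y,z \in V$; these are jointly satisfied by asserting $x$, so equisatisfiability is preserved, and they make the VIG into $K_{|V|+1}$. You instead add tautological length-$k$ clauses (containing a complementary literal pair) to fill in exactly the missing edges. Both work; yours has the minor advantage of keeping all clauses the same length $k$ as the input (matching the paper's standing $k$-CNF convention, which the paper's own proof quietly breaks by mixing in $3$-clauses), while the paper's is slightly simpler to state since it doesn't need to reason about which edges are already present. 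Your handling of the weighted VIG — insisting that all $c(N)$ copies carry identical weights so the PCM hypothesis applies — is exactly the care the paper also takes by conjoining disjoint renamed copies of a single formula $\psi$.
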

\begin{proof}
Given a $\SAT$ instance $\phi$, we will convert it into an
equisatisfiable instance of $\SAT_{m,\epsilon}$ in polynomial time.  Let $V$ be
the set of all variables occurring in $\phi$, along with new variables as necessary so that $|V| \ge n_0$.
Fixing a variable $x$ not in $V$, let $\psi$ be the formula obtained by adding to $\phi$ all clauses of the
form $x \lor y \lor z$ with $y, z \in V$.  Clearly, the VIG of $\psi$ is $K_n$ with $n = |V| + 1 \ge n_0$.
Furthermore, $\phi$ and $\psi$ are equisatisfiable, since we can simply assert $x$ to
satisfy all the new clauses.  Now letting $\chi$ be the conjunction of $c(n)$
disjoint copies of $\psi$ (i.e. copies with variables renamed so none are
common), the variable incidence graph $G$ of $\chi$ is $K_n^{c(n)}$ (with some positive weights).
By the PCM property, we have $m(G) \ge 1-\epsilon$, so $\chi \in
\SAT_{m,\epsilon}$.  Since $\chi$ and $\psi$ are clearly equisatisfiable, so are
$\chi$ and $\phi$, and thus this procedure gives a reduction from $\SAT$ to
$\SAT_{m,\epsilon}$.  Finally, the procedure is polynomial-time since $c(n)$ has
at most polynomial growth and can be computed in polynomial time.
\qed
\end{proof}

\section{Average-Case Hardness} \label{sec:average-case}

In contrast to the previous section, we now consider the difficulty of modular instances for a particular class of algorithms, namely those like CDCL which prove unsatisfiability by effectively constructing a resolution refutation.
While these results are therefore more specific, they are also much more powerful: they show that modular instances are difficult not just in the worst case but also on average.

Our argument is largely based on the resolution lower bound of Beame and Pitassi \cite{beame}, which can be used to establish the hardness of instances from the random $k$-{\SAT} model.
In order to use that result, we need to show that most instances from the community attachment model have certain \emph{sparsity} properties used by the proof.
So our main steps, detailed in Sections \ref{sec:defining-distribution}--\ref{sec:cdcl-runtime} below, are as follows:
\begin{enumerate}
 \item Define a new distribution $\overline{F}_k(n,m,c,p;m')$ over $k$-CNF formulas that works by taking a \emph{random subformula} of an instance from the random $k$-{\SAT} model $F_k(n,m')$.
 \item Show that this new distribution is in fact identical to the community attachment model $F_k(n,m,c,p)$.
 \item Observe that the sparsity properties are inherited by subformulas, so the sparsity result in \cite{beame} for the random $k$-{\SAT} model $F_k(n,m')$ transfers to the community attachment model $F_k(n,m,c,p)$.
 \item Adapt the Beame--Pitassi argument \cite{beame} to obtain an exponential lower bound on the resolution refutation length.
 \item Conclude that CDCL takes exponential time on unsatisfiable formulas from the community attachment model $F_k(n,m,c,p)$ with high probability.
\end{enumerate}

\subsection{Defining the New Distribution} \label{sec:defining-distribution}

We begin by defining our new distribution $\overline{F}_k(n,m,c,p;m')$, which takes an additional parameter $m'$ that we will specify in Section \ref{sec:transfer}.
\begin{definition} \label{def:mod-levy-dist}
Let $n, m, c, k, m' \in \N$ and $p \in [0, 1]$ such
that $2 \le k \leq c \leq n / k$. Then $\overline{F}_k(n, m, c, p;
m')$ is the distribution over $k$-CNF formulas with $n$ variables and $m$
clauses defined by Algorithm \ref{mod_levy} (which is such a distribution by Lemma \ref{lemma:well-defined} below).
\end{definition}

\begin{algorithm}
\caption{defining the distribution $\overline{F}_k(n, m, c, p; m')$}
\label{mod_levy}
\begin{algorithmic}[1]
\State choose $\phi$ from $F_k(n, m')$ \label{line:sample-phi}
\State choose a uniformly random partition of the $n$ variables into $c$ communities
\State $h \gets c\binom{n / c}{k} / \binom{n}{k}$
\State $b \gets (n / c)^k\binom{c}{k} / \binom{n}{k}$
\State $\psi \gets$ the empty formula on $n$ variables
\ForAll {clauses $C$ of $\phi$}
    \WithProb {$p$} \label{line:loop-start}
        \If {$C$ is within a community}
            \State add $C$ to $\psi$
        \EndIf
    \OtherwiseProb \label{line:otherwise-branch}
        \WithProb {$h / b$} \label{line:second-coin}
            \If {$C$ is a bridge clause}
                \State add $C$ to $\psi$
            \EndIf
        \EndProb
    \EndProb
    \If {$|\psi| = m$}
        \Return $\psi$ \Comment{the algorithm ``succeeds''} \label{line:success}
    \EndIf
\EndFor
\State choose a fresh $\psi$ from $F_k(n, m, c, p)$ \label{line:sample-psi} \\
\Return $\psi$ \Comment{the algorithm ``fails''} \label{line:failure}
\end{algorithmic}
\end{algorithm}

\begin{lemma} \label{lemma:well-defined}
For all parameters satisfying the conditions of Definition \ref{def:mod-levy-dist}, Algorithm \ref{mod_levy} defines a probability distribution over $k$-CNF formulas with $n$ variables and $m$ clauses.
\end{lemma}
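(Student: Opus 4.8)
The plan is to check that Algorithm~\ref{mod_levy} is a genuine randomized procedure --- one in which every probabilistic branch is invoked with an actual probability in $[0,1]$ --- that halts on every input and always returns a $k$-CNF formula on $n$ variables with exactly $m$ clauses; granting this, the law of its output is automatically a probability distribution on that finite set of formulas, which is what the lemma asserts. Halting is immediate: the \textbf{for all} loop ranges over the $m'$ clauses of $\phi$ and performs $O(1)$ work per clause, and if it finishes without returning we make one further draw from $F_k(n,m,c,p)$ on line~\ref{line:sample-psi}; so every execution reaches line~\ref{line:success} or line~\ref{line:failure} after finitely many steps. I would also note in passing that the hypotheses $2 \le k \le c \le n/k$ give $n \ge ck \ge k$, so the distributions $F_k(n,m')$ and $F_k(n,m,c,p)$ (and the partition into $c$ communities) invoked by the algorithm are well-defined.

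The only point that is not pure bookkeeping is that the branch ``\textbf{with probability} $h/b$'' on line~\ref{line:second-coin} must be called with a legitimate probability, i.e. $0 \le h/b \le 1$. Nonnegativity is clear, since under the hypotheses the binomial coefficients $\binom{n/c}{k}$, $\binom{c}{k}$ and $\binom{n}{k}$ are all strictly positive, so $h$ and $b$ are positive; and the $\binom{n}{k}$ in the denominators of $h$ and $b$ cancels in the ratio, so $h/b \le 1$ is equivalent to $c\binom{n/c}{k} \le (n/c)^k\binom{c}{k}$. Expanding the two binomials as falling factorials over $k!$ and cancelling $k!$ together with one factor of $c$, this becomes
\[
\prod_{i=0}^{k-1}\left(\frac{n}{c}-i\right) \;\le\; \left(\frac{n}{c}\right)^{k}\prod_{i=1}^{k-1}(c-i).
\]
The left-hand side is a product of $k$ factors, each in $(0, n/c]$ because $n/c \ge k$, hence at most $(n/c)^k$; the right-hand side is $(n/c)^k$ times a product of $k-1$ positive integers (each at least $c-k+1 \ge 1$), hence at least $(n/c)^k$. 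This inequality is exactly where the constraints relating $k$, $c$ and $n$ enter, and I expect it to be the one genuinely nontrivial step of the proof.

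Finally I would confirm that every output has the claimed form. If the algorithm returns on line~\ref{line:success}, the guard $|\psi| = m$ has just passed; and since $\psi$ starts empty and gains at most one clause per loop iteration --- each such clause being a clause of $\phi \sim F_k(n,m')$, hence a width-$k$ clause over the $n$ fixed variables --- the size of $\psi$ never overshoots $m$, so at that moment $\psi$ is a $k$-CNF on $n$ variables with exactly $m$ clauses. If instead the algorithm returns on line~\ref{line:failure}, it outputs a sample of $F_k(n,m,c,p)$, which by definition is supported on $k$-CNF formulas with $n$ variables and $m$ clauses. Either way the output lies in the intended set, so Algorithm~\ref{mod_levy} defines a probability distribution over the $k$-CNF formulas with $n$ variables and $m$ clauses.
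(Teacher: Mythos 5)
Your proof is correct and follows essentially the same approach as the paper's: establish $h/b \le 1$ so the coin flip on line~\ref{line:second-coin} is a legitimate probability, observe that the algorithm terminates, and check that both return paths yield a $k$-CNF formula on $n$ variables with exactly $m$ clauses. The only variation is in the arithmetic for $h/b \le 1$, where you cancel $\binom{n}{k}$, $k!$, and a factor of $c$ and compare the resulting products directly, while the paper instead applies the standard bounds $\binom{n/c}{k} \le (n/c)^k/k!$ and $\binom{c}{k} \ge (c/k)^k$ to reach $h/b \le k^k/(k!\,c^{k-1}) \le 1/(k-1)!$; both derivations rest on the same hypotheses $c \ge k$ and $n/c \ge k$ and are equally valid.
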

\begin{proof}
First we must check that $h/b \le 1$ so that the algorithm is well-defined.
We have
\[
\frac{h}{b} = \frac{c \binom{n/c}{k}}{\left( \frac{n}{c} \right)^k \binom{c}{k}} \le \frac{c \left( \frac{n}{c} \right)^k}{k! \left( \frac{n}{c} \right)^k \left( \frac{c}{k} \right)^k} = \frac{k^k}{k! \; c^{k-1}} \le \frac{1}{(k-1)!} \le 1,
\]
since we assume $c \ge k$.
Algorithm \ref{mod_levy} always terminates, returning a formula $\psi$ from either line \ref{line:success} or \ref{line:failure}.
In the first case, $\psi$ is a subset of $\phi$, which is drawn from $F_k(n,m')$ and so has $k$-CNF clauses over $n$ variables.
Furthermore, the algorithm does not return from line \ref{line:success} unless $\psi$ has $m$ clauses.
In the second case, $\psi$ is drawn from $F_k(n,m,c,p)$, and so again is a $k$-CNF formula with $n$ variables and $m$ clauses.
\qed
\end{proof}

\subsection{Comparing the Distribution to the Community Attachment Model}

Next we prove that our definition via Algorithm \ref{mod_levy} is equivalent to the usual community attachment definition.
Since the algorithm adds each clause independently, in essence this amounts to showing that each clause is within a community with probability $p$.
\begin{lemma} \label{lemma:equiv}
For any $m' \in \N$, the distribution $\overline{F}_k(n, m, c, p ; m')$ is identical to the distribution $F_k(n, m, c, p)$.
\end{lemma}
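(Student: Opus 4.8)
The plan is to show that Algorithm~\ref{mod_levy} produces each of its $m$ clauses according to exactly the same conditional law as the community attachment model, so that — since in both models the $m$ clauses are added independently — the two distributions coincide. First I would observe that both processes begin by choosing a uniformly random partition of the $n$ variables into $c$ communities, and that the partition plays the same role in each, so it suffices to condition on a fixed partition and compare the law of a single added clause. In $F_k(n,m,c,p)$, that law is: with probability $p$, a uniformly random within-community clause, and with probability $1-p$, a uniformly random bridge clause. I need to show the ``conditionally on success'' clause distribution of Algorithm~\ref{mod_levy} is the same.

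The core computation is as follows. A clause $C$ of the source formula $\phi \sim F_k(n,m')$ is a uniformly random $k$-subset of the $n$ variables (with random polarities, which match in both models and can be ignored for this count). The probability that such a uniform $k$-subset lies within a single community is $h = c\binom{n/c}{k}/\binom{n}{k}$, and conditioned on that, $C$ is uniform among within-community clauses. Similarly, the probability that $C$ is a bridge clause (all $k$ variables in distinct communities) is $b = (n/c)^k\binom{c}{k}/\binom{n}{k}$, and conditioned on that, $C$ is uniform among bridge clauses. Now I would trace through the accept/reject logic: a clause $C$ is added in the ``$p$-branch'' iff the first coin (probability $p$) comes up and $C$ is within a community, which happens with probability $p \cdot h$, and in that case $C$ is a uniform within-community clause; $C$ is added in the ``otherwise'' branch iff the first coin fails (probability $1-p$), the second coin (probability $h/b$) comes up, and $C$ is a bridge clause, which happens with probability $(1-p)(h/b) b = (1-p)h$, and then $C$ is a uniform bridge clause. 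Hence, conditioned on being added \emph{at all}, a clause is within a community with probability $ph / (ph + (1-p)h) = p$ and is then uniform among such clauses, and is a bridge clause with the complementary probability $1-p$, again uniform among those. This is exactly the law of a single clause in $F_k(n,m,c,p)$, and successive added clauses are conditionally independent because the clauses of $\phi$ are independent and each is processed with fresh coins.

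The one remaining issue is the ``failure'' path: if $\phi$ runs out of clauses before $\psi$ reaches size $m$, Algorithm~\ref{mod_levy} discards everything and resamples $\psi$ directly from $F_k(n,m,c,p)$. So I would argue that the output is a mixture over the two paths (``success,'' ``failure''), and in \emph{both} cases the (conditional) output distribution is $F_k(n,m,c,p)$: on the success path by the argument above, on the failure path by fiat. Since a mixture of copies of the same distribution is that distribution, $\overline{F}_k(n,m,c,p;m')$ equals $F_k(n,m,c,p)$ regardless of $m'$, and in particular regardless of how often the failure path is taken. I expect the main subtlety — though not really an obstacle — to be stating the conditional-independence claim cleanly: one should note that ``stop once $|\psi| = m$'' is a stopping rule that does not disturb the conditional law of the clauses accepted so far, so that on the success path the $m$ accepted clauses are genuinely i.i.d.\ from the single-clause law computed above. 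Everything else is the elementary counting identity $ph + (1-p)h = h$ together with the definitions of $h$ and $b$.
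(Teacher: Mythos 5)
Your proposal is correct and follows essentially the same route as the paper's proof: handle the failure branch trivially (it samples from $F_k(n,m,c,p)$ by construction), then on the success branch show via Bayes' rule that an added clause is within-community with probability $ph/(ph+(1-p)h)=p$ and otherwise a bridge clause, with uniformity and independence carrying over from $\phi \sim F_k(n,m')$. You spell out a couple of points the paper leaves implicit — that conditioning on the clause's type yields uniformity within that type, and that stopping at $|\psi|=m$ does not disturb the conditional law of the accepted clauses — but these are refinements, not a different argument.
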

\begin{proof}
When Algorithm \ref{mod_levy} returns a formula $\psi$ from line \ref{line:failure}, $\psi$ is drawn from $F_k(n,m,c,p)$, and so the two distributions are trivially identical.
So we need only consider the case when the algorithm returns from line \ref{line:success}.
Because the algorithm handles each clause of $\phi$ independently (until $m$ clauses are added), it suffices to show that when a clause is added to $\psi$, it is within a community with probability $p$ and is otherwise a bridge clause.
Starting from line \ref{line:loop-start} of Algorithm \ref{mod_levy}, let $\ccomm$ be the event that the clause $C$ is within a community, $\cbridge$ the event that $C$ is a bridge clause, and $\cadded$ the event that $C$ is added to $\psi$.
Let $A$ be the event that the algorithm takes the random branch on line \ref{line:loop-start} instead of the branch on line \ref{line:otherwise-branch}.
Then we have
\[
\Pr[ \ccomm | \cadded ] = \Pr[ \ccomm | A, \cadded] \Pr[A | \cadded] + \Pr[ \ccomm | \overline{A}, \cadded] \Pr[ \overline{A} | \cadded] .
\]
The second term is zero because $\overline{A}$ means the algorithm takes the branch on line \ref{line:otherwise-branch} and thus only adds the clause if it is a bridge clause.
Likewise, $\Pr[ \ccomm | A, \cadded] = 1$ because the branch on line \ref{line:loop-start} only adds the clause if it is within a community.
So
\[
\Pr[ \ccomm | \cadded ] = \Pr[A | \cadded] = \Pr[\cadded | A] \Pr[A] \;/\; \Pr[\cadded] .
\]
By straightforward counting arguments, $\Pr[\ccomm] = c\binom{n / c}{k} / \binom{n}{k} = h$ and $\Pr[\cbridge] = (n / c)^k\binom{c}{k} / \binom{n}{k} = b$.
Since the coin flips on lines \ref{line:loop-start} and \ref{line:second-coin} are independent of $C$, we have $\Pr[\cadded | A] = \Pr[\ccomm] = h$ and $\Pr[\cadded | \overline{A}] = (h/b) \Pr[\cbridge] = h$.
Also $\Pr[A] = p$, so
\[
\Pr[\cadded] = \Pr[\cadded | A] \Pr[A] + \Pr[\cadded | \overline{A}] \Pr[\overline{A}] = h p + h (1-p) = h .
\]
Plugging these into the expression above we obtain $\Pr[ \ccomm | \cadded ] = p$.
So each clause added to $\psi$ is within a community with probability $p$, and otherwise by construction it must be a bridge clause.
Therefore when Algorithm \ref{mod_levy} returns from line \ref{line:success}, it is equivalent to generating $m$ clauses independently, each of which is a uniformly random clause within a community with probability $p$, and otherwise a uniformly random bridge clause.
So $\overline{F}_k(n, m, c, p ; m')$ is identical to $F_k(n,m,c,p)$.
\qed
\end{proof}

\subsection{Transferring Subformula-Inherited Properties} \label{sec:transfer}

Algorithm \ref{mod_levy} can ``fail'' by adding fewer than the desired number of clauses $m$ to $\psi$, then falling back on the community attachment model as a backup.
Otherwise, the algorithm ``succeeds'', returning on line \ref{line:success} a formula that was built up from clauses of $\phi$ and is therefore a subformula of it.
Since our goal is to have the formulas from this distribution inherit properties from $\phi$, we need to ensure that Algorithm \ref{mod_levy} succeeds with high probability.
We can do this by taking $m'$, the number of clauses in $\phi$, to be large enough: then even if a given clause is only added to $\psi$ with a small probability, overall we are likely to add $m$ of them.
As we will see in the proof, the probability of adding a clause is roughly $1/c^{k-1}$, so taking $m'$ to be slightly larger than $c^{k-1} m$ will suffice.
We use the following standard tail bound.
\begin{lemma} \label{bound} If $B(n,p)$ is the number of successes in $n$ Bernoulli trials each with success probability $p$, then for $k < pn$ we have
\[
\Pr[B(n, p) \leq k] \leq \exp\left(\frac{-(pn - k)^2}{2pn}\right).
\]
\end{lemma}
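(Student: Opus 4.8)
The plan is to prove this via the standard Chernoff bounding technique, since Lemma \ref{bound} is exactly the multiplicative Chernoff lower-tail bound rewritten in additive form. Write $B(n,p) = \sum_{i=1}^n X_i$ where the $X_i$ are independent Bernoulli random variables with $\Pr[X_i = 1] = p$, and set $\mu = pn = \E[B(n,p)]$. Since $k < pn$ by hypothesis, we may write $k = (1-\delta)\mu$ for $\delta = (\mu - k)/\mu \in (0,1)$, so that the claimed bound becomes $\Pr[B(n,p) \le (1-\delta)\mu] \le \exp(-\delta^2\mu/2)$.

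First I would apply Markov's inequality to the nonnegative random variable $e^{-tB(n,p)}$ for an arbitrary $t > 0$:
\[
\Pr[B(n,p) \le k] = \Pr[e^{-tB(n,p)} \ge e^{-tk}] \le e^{tk}\,\E[e^{-tB(n,p)}].
\]
By independence, $\E[e^{-tB(n,p)}] = \prod_{i=1}^n \E[e^{-tX_i}] = \bigl(1 - p(1 - e^{-t})\bigr)^n$, and using $1 + x \le e^x$ this is at most $\exp\!\bigl(-\mu(1 - e^{-t})\bigr)$. Combining, $\Pr[B(n,p) \le k] \le \exp\!\bigl(tk - \mu(1-e^{-t})\bigr)$ for every $t > 0$.

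Next I would choose $t = -\ln(1-\delta) > 0$, so that $e^{-t} = 1-\delta$ and the exponent becomes $-(1-\delta)\mu\ln(1-\delta) - \delta\mu = -\mu\bigl[(1-\delta)\ln(1-\delta) + \delta\bigr]$. The final step is the elementary scalar inequality $(1-\delta)\ln(1-\delta) + \delta \ge \delta^2/2$ for $\delta \in [0,1)$, which follows by observing that both sides vanish at $\delta = 0$ and comparing derivatives (equivalently, from the series $-\ln(1-\delta) = \sum_{j \ge 1}\delta^j/j$). Plugging this in yields $\Pr[B(n,p) \le k] \le \exp(-\mu\delta^2/2) = \exp\!\bigl(-(pn-k)^2/(2pn)\bigr)$, as desired.

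I do not expect any real obstacle here: the result is entirely standard, and the only mildly technical ingredient is the inequality $(1-\delta)\ln(1-\delta)+\delta \ge \delta^2/2$ used to put the bound in the clean quadratic form quoted. Alternatively, one could simply cite a standard reference for the Chernoff bound and omit the derivation entirely.
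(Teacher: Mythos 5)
Your derivation is correct, and the key inequality $(1-\delta)\ln(1-\delta) + \delta \ge \delta^2/2$ does follow exactly as you indicate (both sides and their first derivatives vanish at $\delta = 0$, and the second derivative of the left side minus the right is $\delta/(1-\delta) \ge 0$). There is, however, nothing in the paper to compare against: the authors introduce Lemma \ref{bound} with the phrase ``We use the following standard tail bound'' and never prove it, treating it as citable folklore. Your closing remark that ``one could simply cite a standard reference for the Chernoff bound and omit the derivation entirely'' is in fact precisely what the paper does. So your proposal is not an alternative to the paper's argument but a correct self-contained justification of a step the paper leaves as a black box. One small stylistic note: the chain $\Pr[B \le k] = \Pr[e^{-tB} \ge e^{-tk}]$ uses that $t > 0$, which you have since $\delta \in (0,1)$ forces $t = -\ln(1-\delta) > 0$; you state $t>0$ when introducing Markov but it would be cleaner to flag explicitly that the chosen optimal $t$ satisfies this, rather than leaving it to the reader to check.
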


\begin{lemma} \label{lemma:success}
Suppose that $c$ is $o(n)$, $m \rightarrow \infty$ as $n \rightarrow \infty$, and $m' = (1 + \epsilon)c^{k - 1}m$ for some $\epsilon > 0$.
Then Algorithm \ref{mod_levy} returns from line \ref{line:success} with high probability.
\end{lemma}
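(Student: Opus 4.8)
The plan is to reduce ``Algorithm \ref{mod_levy} succeeds'' to a tail event for a single binomial random variable and then invoke Lemma \ref{bound}. The algorithm returns from line \ref{line:success} exactly when at least $m$ of the $m'$ clauses of $\phi$ drawn on line \ref{line:sample-phi} would be added to $\psi$. By the analysis already carried out in the proof of Lemma \ref{lemma:equiv}, the clauses of $\phi$ are i.i.d., the two coin flips used in processing each clause are independent of everything else, and --- whichever of the branches $A$ or $\overline{A}$ is taken --- a clause is added with probability exactly $h = c\binom{n/c}{k}/\binom{n}{k}$. Hence the number of clauses that would be added, before the truncation at $m$, is distributed as $B(m', h)$, and the algorithm succeeds iff $B(m', h) \ge m$, i.e. $B(m', h) > m-1$.

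Next I would estimate $h$. Expanding the falling factorials gives $h = c\prod_{i=0}^{k-1}\frac{n/c - i}{n-i} = c^{1-k}\prod_{i=0}^{k-1}\frac{1 - ic/n}{1 - i/n}$. Since $c = o(n)$ and $k$ is fixed, both $ic/n$ and $i/n$ tend to $0$ for every $i < k$, so the product tends to $1$ and $h = c^{1-k}(1 + o(1))$. Therefore $h m' = (1+\epsilon)(1+o(1))\,m$, and for all sufficiently large $n$ we have both $h m' \ge (1 + \epsilon/2)m$ and $h m' \le 2(1+\epsilon)m$.

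Finally I would apply Lemma \ref{bound} with the substitutions ``$n$'' $\mapsto m'$, ``$p$'' $\mapsto h$, ``$k$'' $\mapsto m-1$; its hypothesis $k < pn$ holds because $m - 1 < m \le h m'$ for large $n$. Using $h m' - (m-1) \ge (\epsilon/2)m$ in the numerator and $h m' \le 2(1+\epsilon)m$ in the denominator,
\[
\Pr[\,B(m', h) \le m-1\,] \;\le\; \exp\!\left(\frac{-(h m' - (m-1))^2}{2 h m'}\right) \;\le\; \exp\!\left(\frac{-\epsilon^2 m}{16(1+\epsilon)}\right).
\]
Since $m \to \infty$ as $n \to \infty$, the right-hand side tends to $0$, so Algorithm \ref{mod_levy} returns from line \ref{line:success} with high probability.

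I expect the one delicate point to be the estimate $h = c^{1-k}(1+o(1))$, and specifically the need to keep $h m' = \Theta(m)$ rather than fall back on the lossy bound $h m' \le m'$: the latter would leave a factor $c^{k-1}$ in the denominator of the exponent and the bound would fail to vanish without an extra hypothesis such as $m \gg c^{k-1}$. This estimate is also the only place the assumption $c = o(n)$ is used; the reduction to a binomial and the tail bound itself are routine given Lemmas \ref{lemma:equiv} and \ref{bound}.
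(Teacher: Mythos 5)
Your proposal is correct and follows essentially the same route as the paper's proof: reduce success to the binomial tail event $B(m',h)\ge m$ using the per-clause acceptance probability $h$ established in Lemma~\ref{lemma:equiv}, estimate $h c^{k-1}\to 1$ from $c=o(n)$, and invoke Lemma~\ref{bound} to get an $\exp(-\Theta(m))$ failure probability. The only cosmetic difference is that the paper uses the exact inequality $hc^{k-1}\le 1$ (so it obtains $8(1+\epsilon)$ where you obtain $16(1+\epsilon)$ in the denominator of the exponent), which does not affect the conclusion.
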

\begin{proof}
As shown in Lemma \ref{lemma:equiv}, the probability that starting from line \ref{line:loop-start} the clause $C$ will be added to $\psi$ is $h$.
So the probability that Algorithm \ref{mod_levy} returns from line \ref{line:success} is $\Pr[B(m', h) \geq m] = 1 - \Pr[B(m', h) \leq m - 1]$.
Now observe that
\[
h c^{k-1} = \frac{c^k\binom{n /c}{k}}{\binom{n}{k}} = \frac{n (n-c) \cdots (n-c(k+1))}{n (n-1) \cdots (n-k+1)} \le 1.
\]
Furthermore, we have
\[
\lim_{n \rightarrow \infty} h c^{k-1} = \lim_{n \rightarrow \infty} \frac{c^k\binom{n /c}{k}}{\binom{n}{k}} = \lim_{n \rightarrow \infty} \left[ \frac{\binom{n/c}{k}}{\frac{(n/c)^k}{k!}} \cdot \frac{\frac{n^k}{k!}}{\binom{n}{k}} \right] = \left[ \lim_{n \rightarrow \infty} \frac{\binom{n/c}{k}}{\frac{(n/c)^k}{k!}} \right] \left[ \lim_{n \rightarrow \infty} \frac{\frac{n^k}{k!}}{\binom{n}{k}} \right] = 1,
\]
where in evaluating the second-to-last limit we use the fact that $c$ is $o(n)$ and so $\lim_{n \rightarrow \infty} (n/c) = \infty$.
So for sufficiently large $n$ we have $hc^{k - 1} \ge 1 - \epsilon / 2(1 + \epsilon)$, and therefore
\[
h m' = h(1 + \epsilon)c^{k - 1}m \ge \left(1 - \frac{\epsilon}{2(1 + \epsilon)} \right) (1 + \epsilon) m = (1 + \epsilon/2) m.
\]
Applying Lemma \ref{bound}, we have
\begin{align*}
\Pr[B(m',h) \le m - 1] &\le \exp \left( \frac{-[hm' - (m-1)]^2}{2 h m'} \right) \le \exp \left( \frac{-[(1+\epsilon/2) m - m]^2}{2 h (1+\epsilon) c^{k-1} m} \right) \\
&= \exp \left( \frac{-m (\epsilon/2)^2}{2(1+\epsilon) \cdot h c^{k-1}} \right) \le \exp \left( \frac{-m \epsilon^2}{8 (1+\epsilon)} \right),
\end{align*}
which goes to zero as $m \rightarrow \infty$, and therefore as $n \rightarrow \infty$.
So with high probability, Algorithm \ref{mod_levy} will return from line \ref{line:success}.
\qed
\end{proof}

Now it is simple to show that subformula-inherited properties are indeed passed down from random $k$-{\SAT} instances to instances drawn from our distribution.
Here ``subformula-inherited'' simply means that if $\varphi$ has the property, then any formula made up of a subset of the clauses of $\varphi$ also has the property.
For example, being satisfiable is subformula-inherited, but being unsatisfiable is not.
\begin{lemma} \label{lemma:inherit}
Suppose that $c$ is $o(n)$, $m \rightarrow \infty$ as $n \rightarrow \infty$, $m' = (1 + \epsilon)c^{k - 1}m$ for some $\epsilon > 0$, and $P$ is a subformula-inherited property.
Then if a formula drawn from $F_k(n,m')$ has property $P$ with high probability, a formula drawn from $\overline{F}_k(n, m, c, p ; m')$ has property $P$ with high probability.
\end{lemma}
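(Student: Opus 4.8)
The plan is to combine the two preceding lemmas by a union bound. The starting point is that Algorithm \ref{mod_levy} explicitly draws an intermediate formula $\phi$ from $F_k(n,m')$ on line \ref{line:sample-phi}, so ``$\phi$ has property $P$'' is a well-defined event in the probability space of a single run of the algorithm; by hypothesis its probability tends to $1$ as $n \to \infty$ (note that $m \to \infty$ forces $m' \to \infty$, so the hypothesis applies). Independently, the hypotheses of the present lemma are exactly those of Lemma \ref{lemma:success}, so the event that the algorithm returns from line \ref{line:success} also has probability tending to $1$.

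Next I would note that whenever the algorithm returns from line \ref{line:success}, every clause it has placed in $\psi$ --- whether in the ``with probability $p$'' branch or in the ``otherwise'' branch --- is a clause of $\phi$, so in that case $\psi$ is a subformula of $\phi$. Since $P$ is subformula-inherited, on the event that the algorithm succeeds \emph{and} $\phi$ has property $P$, the returned formula $\psi$ has property $P$ as well.

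Finally, writing $S$ for the success event and $P_\phi$ for the event that $\phi$ has property $P$, the output $\psi$ of the algorithm --- which by Lemma \ref{lemma:equiv} is distributed as $\overline{F}_k(n,m,c,p;m')$ --- satisfies $\Pr[\psi \text{ has } P] \ge \Pr[S \cap P_\phi] \ge 1 - \Pr[\overline{S}] - \Pr[\overline{P_\phi}]$, and both subtracted terms go to $0$ by the previous paragraphs. Hence $\psi$ has property $P$ with high probability, as claimed.

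I do not anticipate any real obstacle here: this lemma is essentially bookkeeping on top of Lemma \ref{lemma:success}. The one point deserving care is that the two high-probability facts being combined refer to the \emph{same} probability space --- a single run of Algorithm \ref{mod_levy}, in which $\phi$ and the loop's coin flips are jointly sampled --- so that the union bound applies directly, with no need for any independence between the draw of $\phi$ and the coin flips on lines \ref{line:loop-start} and \ref{line:second-coin}.
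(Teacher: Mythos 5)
Your proof is correct and takes essentially the same route as the paper: run Algorithm~\ref{mod_levy}, use Lemma~\ref{lemma:success} for the success event, observe that on success $\psi \subseteq \phi$ so $P$ is inherited, and combine with the high-probability hypothesis on $\phi$. Your explicit union bound $\Pr[\psi \text{ has } P] \ge 1 - \Pr[\overline{S}] - \Pr[\overline{P_\phi}]$ is in fact slightly cleaner than the paper's intermediate step $\Pr[P_\psi \mid R] \ge \Pr[P_\phi]$, which glosses over the possible correlation between $R$ and $P_\phi$; also note that citing Lemma~\ref{lemma:equiv} at the end is unnecessary, since the output of Algorithm~\ref{mod_levy} is distributed as $\overline{F}_k(n,m,c,p;m')$ by definition.
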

\begin{proof}
Run Algorithm \ref{mod_levy} to sample from $\overline{F}_k(n, m, c, p; m')$.
Let $P_\psi$ and $P_\phi$ respectively be the events that the returned formula $\psi$ and the formula $\phi$ from line \ref{line:sample-phi} have property $P$.
Also let $R$ be the event that the algorithm returns from line \ref{line:success}.
When the algorithm returns from line \ref{line:success}, $\psi$ is a subformula of $\phi$, and since $P$ is inherited by subformulas we have $\Pr[ P_\psi | R ] \ge \Pr[ P_\phi ]$.
Now as $\phi$ is drawn from $F_k(n,m')$, the event $P_\phi$ occurs with high probability, and so $\Pr[ P_\psi | R ] \rightarrow 1$ as $n \rightarrow \infty$.
By Lemma \ref{lemma:success}, the event $R$ also happens with high probability, so $\Pr[ P_\psi ] \ge \Pr[ P_\psi \land R ] = \Pr[ P_\psi | R ] \cdot \Pr[R] \rightarrow 1$ as $n \rightarrow \infty$.
Therefore $\psi$ has property $P$ with high probability.
\qed
\end{proof}

Together, Lemmas \ref{lemma:equiv} and \ref{lemma:inherit} show that subformula-inherited properties of random $k$-{\SAT} instances are also possessed (with high probability) by instances from the community attachment model.

\subsection{Proving the Resolution Lower Bounds} \label{sec:beame-pitassi}

Now we transition to adapting the argument of Beame and Pitassi \cite{beame}.
The proof uses two types of sparsity conditions.
Both view a clause $C$ as a set of variables, so that another set of variables $X$ ``contains'' $C$ if and only if every variable in $C$ is in $X$.
\begin{definition}
A formula is \textbf{$n'$-sparse} if every set of $s \leq n'$ variables contains at most $s$ clauses. \end{definition}
\begin{definition}
Let $n' < n''$. A formula is \textbf{$(n', n'', y)$-sparse} if every set of $s$ variables with $n' < s \leq n''$ contains at most $ys$ clauses.
\end{definition}
These are both clearly subformula-inherited.

The Beame--Pitassi argument \cite{beame} is broken into three major lemmas, each of which we will use without change.
The last lemma establishes the sparsity properties above for the random $k$-{\SAT} model.
\begin{lemma}[\cite{beame}] \label{big-clause} Let $n' \leq n$ and $F$ be an unsatisfiable
CNF formula in $n$ variables with clauses of size at most $k$ that is both
$n'$-sparse and $(n'(k + \epsilon) / 4, n'(k + \epsilon) / 2, 2 / (k +
\epsilon))$-sparse. Then any resolution proof $P$ of the unsatisfiability of $F$
must include a clause of length at least $\epsilon n' / 2$. \end{lemma}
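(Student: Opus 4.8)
The plan is to run the standard progress-measure (``bottleneck'') argument for resolution width, with the two sparsity hypotheses supplying the expansion needed at exactly the scales for which they are stated. Suppose toward a contradiction that $F$ has a resolution refutation $P$ in which every clause has fewer than $\epsilon n' / 2$ literals. To each clause $D$ occurring in $P$ assign the measure $\mu(D)$, the least number of clauses of $F$ whose conjunction \emph{semantically} implies $D$. Three properties are immediate: (i) $\mu(A) = 1$ for every axiom $A$ of $F$; (ii) if $D$ is resolved from $D_1$ and $D_2$ then $\mu(D) \le \mu(D_1) + \mu(D_2)$, since a set of clauses implying $D_1$ together with one implying $D_2$ implies $D$; and (iii) the empty clause $\emptyset$ satisfies $\mu(\emptyset) > n'(k+\epsilon)/2$.

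Property (iii) is where sparsity first enters, so I would establish it next. If $S$ is a minimal set of clauses of $F$ whose conjunction is unsatisfiable, then $S$ is minimally unsatisfiable, and by the classical fact that such a CNF has strictly more clauses than variables, $|S| > |\mathrm{vars}(S)|$, where $\mathrm{vars}(S)$ is the set of variables occurring in the clauses of $S$. If $|\mathrm{vars}(S)| \le n'$ then $n'$-sparsity forces $|S| \le |\mathrm{vars}(S)|$, a contradiction; and if $n' < |\mathrm{vars}(S)| \le n'(k+\epsilon)/2$ then $(n'(k+\epsilon)/4, n'(k+\epsilon)/2, 2/(k+\epsilon))$-sparsity forces $|S| \le \tfrac{2}{k+\epsilon}|\mathrm{vars}(S)| < |\mathrm{vars}(S)|$ (using $k + \epsilon > 2$), again a contradiction. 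Hence $|\mathrm{vars}(S)| > n'(k+\epsilon)/2$, so $\mu(\emptyset) = |S| > n'(k+\epsilon)/2$. A standard descent down the proof DAG from $\emptyset$ --- which has $\mu$-value exceeding $n'(k+\epsilon)/4$, while every axiom has $\mu$-value $1$ --- then yields a non-axiom clause $C$ of $P$ with $\mu(C) > n'(k+\epsilon)/4$ such that neither of the two clauses from which $C$ is resolved has $\mu$-value exceeding $n'(k+\epsilon)/4$; by (ii) we get $n'(k+\epsilon)/4 < \mu(C) \le n'(k+\epsilon)/2$. Write $t := \mu(C)$ and fix a set $S$ of exactly $t$ clauses of $F$ with $S \models C$.

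The core of the argument is to bound the length of $C$ from below via $S$. Let $\partial S$ be the set of variables occurring in exactly one clause of $S$. I would first show $\partial S \subseteq \mathrm{vars}(C)$: if some $v \in \partial S$, occurring only in the clause $A \in S$, were not in $\mathrm{vars}(C)$, then $S \setminus \{A\}$ would already imply $C$, contradicting the minimality of $|S|$ --- for an assignment satisfying $S \setminus \{A\}$ but not $C$ would have to falsify $A$, and flipping $v$ so as to satisfy $A$ (which leaves $C$ unchanged, since $v \notin \mathrm{vars}(C)$) would produce an assignment satisfying all of $S$ and hence $C$, which is impossible. Since $C$ has at least $|\mathrm{vars}(C)|$ literals, it has at least $|\partial S|$ literals. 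Counting literal occurrences in $S$ --- each variable of $S$ occurs at least once and each variable not in $\partial S$ at least twice --- gives $kt \ge \sum_{A \in S} |A| \ge 2|\mathrm{vars}(S)| - |\partial S|$, so $C$ has at least $|\partial S| \ge 2|\mathrm{vars}(S)| - kt$ literals.

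It then remains to show that the sparsity hypotheses force the $t$-clause set $S$, with $t$ in the window $(n'(k+\epsilon)/4,\, n'(k+\epsilon)/2]$, to span at least $(k+\epsilon)t/2$ variables; granting this, $C$ has at least $2|\mathrm{vars}(S)| - kt \ge (k+\epsilon)t - kt = \epsilon t$ literals, and since $t > n'(k+\epsilon)/4 \ge n'/2$ (using $k+\epsilon \ge 2$) this exceeds $\epsilon n'/2$, contradicting the assumption on $P$ and completing the proof. I expect this last step to be the main obstacle: it is precisely the point at which one must verify that the two sparsity windows --- the one governing variable-sets of size at most $n'$ and the one governing sizes between $n'(k+\epsilon)/4$ and $n'(k+\epsilon)/2$ --- were chosen so as to interlock and handle every regime that $|\mathrm{vars}(S)|$ can fall into, rather than merely being individually plausible. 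The remaining, progress-measure part of the argument is routine.
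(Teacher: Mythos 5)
The paper does not prove this lemma itself; it is imported verbatim from Beame and Pitassi \cite{beame} and used as a black box, so there is no ``paper's own proof'' to compare against. Your reconstruction is indeed the standard progress-measure (bottleneck) argument that underlies such width lower bounds, so the overall strategy is the right one. However, there is a concrete gap beyond the one you flag at the end.

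In establishing property (iii), you argue that for a minimal unsatisfiable subset $S$ of $F$ either $|\mathrm{vars}(S)| \le n'$ or $n' < |\mathrm{vars}(S)| \le n'(k+\epsilon)/2$ leads to a contradiction, concluding $|\mathrm{vars}(S)| > n'(k+\epsilon)/2$. But the second sparsity hypothesis only controls variable sets of size strictly greater than $n'(k+\epsilon)/4$, not greater than $n'$. When $k \ge 4$ (which is needed, e.g., for Theorem~\ref{thm:k4}), $n'(k+\epsilon)/4 > n'$, so the regime $n' < |\mathrm{vars}(S)| \le n'(k+\epsilon)/4$ is nonempty and is covered by neither sparsity condition; your argument cannot rule it out, and the inequality $\mu(\emptyset) > n'(k+\epsilon)/4$ needed to run the descent at your chosen threshold is not established. (For $k = 3$ with $\epsilon < 1$ this window is empty, which is likely why the gap is easy to overlook.)

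The gap you do flag --- showing that the witness set $S$ for the bottleneck clause $C$ has $|\mathrm{vars}(S)| \ge (k+\epsilon)t/2$ --- is also genuine and, in a sense, harder than the one above, because here $S$ is a minimum set \emph{implying} $C$ rather than a minimally unsatisfiable set, so the classical ``more clauses than variables'' fact does not apply. If $|\mathrm{vars}(S)|$ falls strictly between $n'$ and $n'(k+\epsilon)/4$, neither sparsity hypothesis constrains $t$ in terms of $|\mathrm{vars}(S)|$, and the boundary estimate $|\partial S| \ge 2|\mathrm{vars}(S)| - kt$ can be vacuous. Resolving both gaps requires either a more careful choice of descent threshold tied to $n'$ rather than $n'(k+\epsilon)/4$, together with an argument that forces $|\mathrm{vars}(S)|$ into the window where the second sparsity bites, or a subset/truncation trick on $S$. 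As written, the proposal correctly sets up the machinery but does not yet establish the lemma for general $k$.
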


\begin{lemma}[\cite{beame}] \label{whp-proof} Let $P$ be a resolution refutation of $F$ of
size $S$. Given $\beta > 0$, say the \textbf{large clauses} of $P$ are those clauses mentioning more than $\beta n$
distinct variables.  Then with probability at least $1 - 2^{1 - \beta t / 4}S$, a
random restriction of size $t$ sets all large clauses in $P$ to 1. \end{lemma}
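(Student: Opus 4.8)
The plan is to prove the lemma by first bounding, for a single fixed large clause, the probability that a random restriction of size $t$ fails to set that clause to $1$, and then applying a union bound over the at most $S$ large clauses of $P$.

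First I would fix a large clause $C$, mentioning a set $V_C$ of $\ell > \beta n$ distinct variables, and recall that a random restriction of size $t$ picks a uniformly random $t$-subset $T$ of all $n$ variables and assigns each variable of $T$ an independent uniform bit. The clause $C$ is \emph{not} set to $1$ exactly when no literal of $C$ is satisfied, i.e.\ every variable of $V_C \cap T$ receives the value that falsifies its literal in $C$; conditioned on $|V_C \cap T| = j$ this has probability exactly $2^{-j}$. Writing $J := |V_C \cap T|$, this gives $\Pr[C \text{ not set to } 1] = \E[2^{-J}]$, where $J$ is hypergeometric with mean $\E[J] = t\ell/n > \beta t$. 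Next I would bound $\E[2^{-J}]$: since $x \mapsto 2^{-x}$ is convex, Hoeffding's comparison of sampling without replacement to sampling with replacement gives $\E[2^{-J}] \le \E[2^{-J'}]$ for $J' \sim \mathrm{Bin}(t,\ell/n)$ (equivalently, the indicators of which assigned variables fall in $V_C$ are negatively associated, which yields the same bound), and hence
\[
\Pr[C \text{ not set to } 1] \;\le\; \left(1 - \tfrac{\ell}{2n}\right)^{t} \;<\; \left(1 - \tfrac{\beta}{2}\right)^{t} \;\le\; e^{-\beta t/2} \;\le\; 2^{-\beta t/4},
\]
the last step using $\tfrac12 > \tfrac{\ln 2}{4}$.

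To finish, I would take a union bound: $P$ has at most $S$ clauses, hence at most $S$ large clauses, so the probability that some large clause fails to be set to $1$ is at most $S\cdot 2^{-\beta t/4} \le 2^{1-\beta t/4}S$; its complement is the event in the statement. (If $P$ has no large clauses the claim is trivial.) I expect the only real work to be the per-clause estimate — and within it, the single point needing care is handling the negative dependence among the assigned variables coming from sampling without replacement; the reduction to one clause and the closing union bound are routine, and the slack in the last two inequalities comfortably absorbs the stated factor of $2$.
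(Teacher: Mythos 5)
Your proof is correct. Note that the paper does not actually prove this lemma---it is quoted from Beame and Pitassi \cite{beame} and used as a black box---so there is no in-paper argument to compare against. Your reconstruction is the natural one and, if anything, slightly sharper than the stated bound: writing $\ell>\beta n$ for the number of variables mentioned by a fixed large clause $C$ and $J$ for the hypergeometric count $|V_C\cap T|$, you correctly identify $\Pr[C\text{ survives}]=\E[2^{-J}]$, and the passage to the binomial $J'$ via Hoeffding's convex-ordering comparison of sampling without versus with replacement (equivalently, negative association of the inclusion indicators together with monotonicity of $x\mapsto 2^{-x}$) is exactly the right tool for the without-replacement dependence---the one point that genuinely needs care. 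From there $\E[2^{-J}]\le(1-\ell/(2n))^t<(1-\beta/2)^t\le e^{-\beta t/2}\le 2^{-\beta t/4}$ is routine (the last inequality from $1/2>(\ln 2)/4$), and the union bound over at most $S$ large clauses gives failure probability at most $S\,2^{-\beta t/4}$, which is a factor of $2$ better than the quoted $2^{1-\beta t/4}S$. No gaps.
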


\begin{lemma}[\cite{beame}] \label{sparsity} Let $x > 0$, $1 \geq y > 1 / (k - 1)$, and $z
\geq 4$. Fix a restriction $\rho$ on $t \leq \min\{xn / 2, x^{1 - 1 / y(k -
1)}n^{1 - 1/(k - 1)} / z\}$ variables. Drawing $F$ from $F_k(n,m)$ with
\[
m \leq \frac{y}{e^{1 + 1 / y}2^{k + 1 / y}}x^{1/ y - (k - 1)}n ,
\]
then with probability at least $1 - 2^{-t} - (2^k + 1) / z^{k - 1}$, $F\lceil_\rho$ is both $(xn / 2, xn, y)$-sparse and $xn$-sparse.
\end{lemma}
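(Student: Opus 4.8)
The plan is to prove this by a first-moment union bound over small sets of variables that carry too many clauses --- the standard kind of argument for sparsity of random $k$-SAT (this particular lemma being due to Beame and Pitassi, whose proof we would follow). It has three parts: reduce the statement about the restricted formula $F\lceil_\rho$ to one about $F$ itself; estimate, for a \emph{fixed} small variable set, the probability that it captures too many clauses of $F$, via a binomial tail; and union-bound over all such sets and all relevant sizes, plugging in the hypotheses on $m$ and $t$ to make the resulting sum small.

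For the reduction, write $T = \mathrm{dom}(\rho)$, so $|T| = t$. A clause $C\lceil_\rho$ of $F\lceil_\rho$ has all of its variables inside a set $X$ disjoint from $T$ if and only if the original clause $C$ has all of its variables inside $X \cup T$, a set of size $|X| + t$; since restricting can only shrink clauses, absorbing the $t$ restricted variables into the set under consideration is exactly how one accounts for this. Hence it suffices, with $F \sim F_k(n,m)$, to bound (i) the probability that some $X$ with $s := |X| \le xn$ has $X \cup T$ containing at least $s + 1$ clauses of $F$ (which implies failure of $xn$-sparseness of $F\lceil_\rho$), and (ii) the analogous probability for $xn/2 < s \le xn$ with the threshold $\lceil ys \rceil$ replacing $s + 1$ (failure of $(xn/2, xn, y)$-sparseness). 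Next, the number of clauses of $F$ landing inside a fixed set of size $s'$ is $\mathrm{Binomial}\bigl(m, \binom{s'}{k}/\binom{n}{k}\bigr)$, with $\binom{s'}{k}/\binom{n}{k} = \Theta\bigl((s'/n)^k\bigr)$, so the standard tail bound $\Pr[\mathrm{Bin}(m,q) \ge a] \le (emq/a)^a$ applies; and the number of candidate $X$ of size $s$ is $\binom{n}{s} \le (en/s)^s$. Summing the product of these two bounds over $s$, the hypothesis $m \le \frac{y}{e^{1+1/y}2^{k+1/y}}x^{1/y - (k-1)}n$ together with the two-part bound on $t$ makes each summand geometrically decreasing. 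Here the bound on $t$ does double duty: $t \le xn/2$ guarantees $s + t < 2s$ whenever $s > xn/2$, so the packing density $\binom{s+t}{k}/\binom{n}{k}$ is $\Theta((s/n)^k)$ with an absolute constant in that range; and $t \le x^{1 - 1/(y(k-1))}n^{1 - 1/(k-1)}/z$ controls the regime of small $s$, where $s + t \approx t$ and the relevant danger is a clause of $F$ shrinking almost entirely inside $T$.

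The step I expect to be the main obstacle is purely the bookkeeping in the union bound: choosing the right break points for the range of $s$, tracking the binomial factors exactly (a set of $j$ clauses is already ``bad'' when its free variables number only $j - 1$, which shaves a factor of $n$ off the naive estimate and is what rescues the small-$s$ part of the sum), and checking that the specific exponent $1/y - (k-1)$ on $x$ in the bound on $m$ is exactly what makes the dominant term at most a constant: at the critical scale $s \asymp xn$ the product of the entropy factor $(en/s)^s \asymp (e/x)^s$ and the tail factor $\asymp (m x^{k-1}/(yn))^{ys}$ has to be at most $1$, and this forces precisely $m = O(x^{1/y - (k-1)}n)$. Getting the total failure probability to come out as the clean expression $2^{-t} + (2^k + 1)/z^{k-1}$, rather than merely ``some small quantity'', requires following Beame and Pitassi's accounting fairly closely; the conditions $z \ge 4$ and $y > 1/(k-1)$ enter only to give the relevant geometric series some room to converge. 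No idea beyond the first-moment method is needed.
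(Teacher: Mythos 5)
The paper does not prove this lemma at all; it is imported verbatim from Beame and Pitassi (the text just before the three lemmas says explicitly ``each of which we will use without change''), so there is no in-paper proof to compare against. Your outline is, at the level of structure, the correct one and is essentially what Beame and Pitassi do: absorb $T = \mathrm{dom}(\rho)$ into the candidate variable set to reduce sparsity of $F\lceil_\rho$ to sparsity of $F$ on sets of size $s+t$, bound the number of clauses landing in a fixed set by a binomial tail of the form $(emq/a)^a$, and union-bound over sizes, with the hypothesis on $m$ tuned so that the dominant term at scale $s \asymp xn$ is $O(1)$ and the two halves of the bound on $t$ controlling the small-$s$ and mid-range regimes respectively.

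Two caveats. First, your ``if and only if'' in the reduction step is slightly overstated: a clause of $F$ that $\rho$ satisfies outright vanishes from $F\lceil_\rho$, so the correspondence is not a bijection. What actually holds, and is all the union bound needs, is that clauses of $F\lceil_\rho$ contained in $X$ inject into clauses of $F$ contained in $X \cup T$. Second, and more substantively, this is a plan rather than a proof. You correctly flag the key refinement (restricting to minimal failing sets for the $xn$-sparseness part, which shaves the factor of $n$ that would otherwise sink the small-$s$ sum), and you correctly identify what each hypothesis ($t \le xn/2$, $t \le x^{1-1/(y(k-1))}n^{1-1/(k-1)}/z$, the bound on $m$, $z \ge 4$, $y > 1/(k-1)$) is for. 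But arriving at the exact failure probability $2^{-t} + (2^k+1)/z^{k-1}$ rather than ``some $o(1)$ quantity'' requires carrying the Beame--Pitassi bookkeeping through, and you explicitly defer that. Since the paper itself cites the lemma without proof, that deferral is consistent with how the paper uses it; but as a standalone proof the arithmetic gap you acknowledge would need to be closed.
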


We can now combine these to prove the analog of the main theorem of Beame and Pitassi for modular instances.
Our argument is almost identical to theirs: the only difference is that we apply Lemma \ref{sparsity} to larger instances from the random $k$-{\SAT} model,\footnote{Note that as required by its statement, we are applying Lemma \ref{sparsity} to formulas drawn from $F_k(n,m)$, \emph{not} to formulas drawn from $\overline{F}_k(n,m,c,p;m')$. Lemmas \ref{big-clause} and \ref{whp-proof} work for any formula, so we may use all three lemmas precisely as proved in \cite{beame}.} so that our results above will give us sparsity for modular instances of the correct size embedded in them as subformulas.
\begin{theorem} \label{main}
Let $k \geq 3$, $0 < \epsilon < 1$, and $x$, $t$, $z$, $c$ be functions of $n$ such that $x > 0$, $t$ and $z$ are
$\omega(1)$, $c$ is $o(n)$, and $t$ satisfies the conditions of Lemma \ref{sparsity} for all
sufficiently large $n$. Then with high probability, an unsatisfiable formula drawn from $F_k(n, m, c, p)$ with
\[
m \leq \frac{1}{2^{7k / 2}(1 + \epsilon)c^{k - 1}} x^{-(k - 2 - \epsilon) / 2} n
\]
does not have a resolution refutation of size $\leq 2^{\frac{\epsilon}{4(k + \epsilon)}xt} / 8$.
\end{theorem}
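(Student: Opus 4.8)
The plan is to replay the Beame--Pitassi argument \cite{beame} for random $k$-\SAT, feeding it sparsity through the subformula machinery of Sections~\ref{sec:defining-distribution}--\ref{sec:transfer}. Fix $k \ge 3$ and $0 < \epsilon < 1$. Set $y = 2/(k+\epsilon)$, chosen so that $1/y - (k-1) = -(k-2-\epsilon)/2$ (the exponent of $x$ in the asserted bound on $m$) and so that $1 \ge y > 1/(k-1)$, the latter being equivalent to $k > 2+\epsilon$ and hence automatic; and set $n' = 2xn/(k+\epsilon)$, chosen so that $n'(k+\epsilon)/4 = xn/2$ and $n'(k+\epsilon)/2 = xn$. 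With these choices the two sparsity hypotheses Lemma~\ref{big-clause} requires for the parameter $n'$ are precisely the $xn$-sparsity and $(xn/2, xn, y)$-sparsity produced by Lemma~\ref{sparsity} (noting that $n'$-sparsity follows from $xn$-sparsity since $n' < xn$).

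First I would transfer the sparsity properties from $F_k(n, m')$ to the community attachment model. Put $m' = (1+\epsilon)\,c^{k-1} m$. The asserted bound on $m$ gives $(1+\epsilon)c^{k-1}m \le 2^{-7k/2}\,x^{-(k-2-\epsilon)/2}\,n$, so to apply Lemma~\ref{sparsity} to $\phi \sim F_k(n, m')$ it suffices to check $2^{-7k/2} \le y \, / \, \bigl(e^{1+1/y}\,2^{\,k+1/y}\bigr)$; this is a routine monotone estimate in $k$ (tightest near $k=3$, $\epsilon \to 1$), and the constant $2^{7k/2}$ is exactly what makes it hold for every admissible $k,\epsilon$. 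For each fixed restriction $\rho$ of size $t$, the properties ``$(\cdot)\lceil_\rho$ is $xn$-sparse'' and ``$(\cdot)\lceil_\rho$ is $(xn/2, xn, y)$-sparse'' are subformula-inherited; by Lemma~\ref{sparsity} they hold for $\phi \sim F_k(n, m')$ except with probability $2^{-t} + (2^k+1)/z^{k-1} \to 0$ (as $t, z = \omega(1)$), so by Lemma~\ref{lemma:inherit} (which applies since $c = o(n)$ and $m \to \infty$ --- the latter may be assumed, as otherwise $F$ is a.s.\ satisfiable and there is nothing to prove) together with Lemma~\ref{lemma:equiv}, a formula $F \sim F_k(n, m, c, p)$ has $F\lceil_\rho$ satisfying both sparsity properties with probability $\to 1$, uniformly over $\rho$.

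Now I would run the contradiction. Take $\beta = \epsilon x/(k+\epsilon) - 1/t$; if $\beta \le 0$ then $S_0 := 2^{\frac{\epsilon}{4(k+\epsilon)} x t}/8 < 1$ and there is nothing to prove, so assume $\beta > 0$, whence any clause of length $\ge \epsilon n'/2 = \epsilon x n/(k+\epsilon)$ has more than $\beta n$ variables. Draw $F \sim F_k(n, m, c, p)$ and, independently, a random restriction $\rho$ of size $t$, and let $E$ be the event that $F$ is unsatisfiable and has a resolution refutation $P$ with $|P| \le S_0$; on $E$, fix such a $P$. By Lemma~\ref{whp-proof}, $\rho$ sets every clause of $P$ with more than $\beta n$ variables to $1$ except with probability $2^{\,1-\beta t/4}|P| \le 2^{\,1-\beta t/4}S_0 = 2^{-7/4}$; call this good event $A$. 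By the previous paragraph, $F\lceil_\rho$ is $xn$-sparse and $(xn/2, xn, y)$-sparse except with probability $o(1)$ (uniformly over $\rho$); call this good event $B$. When $A \cap B$ occurs, $P\lceil_\rho$ is a resolution refutation of the still-unsatisfiable formula $F\lceil_\rho$, no clause of which has more than $\beta n$ variables --- contradicting Lemma~\ref{big-clause}, which forces a clause of length $\ge \epsilon n'/2 > \beta n$. Hence $\Pr[E] = \Pr[E \cap (\overline A \cup \overline B)] \le \Pr[\overline A \mid E]\,\Pr[E] + \Pr[\overline B] \le 2^{-7/4}\,\Pr[E] + o(1)$, so $\Pr[E] = o(1)$, which is the theorem.

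The main obstacle is the parameter bookkeeping rather than any new idea --- the authors themselves note the argument is ``almost identical'' to \cite{beame}. One has to pin down a single $y$ with $1/(k-1) < y \le 1$ matching the $x$-exponent; confirm that $m' = (1+\epsilon)c^{k-1}m$ stays below Lemma~\ref{sparsity}'s clause threshold for every $k \ge 3$ and $\epsilon < 1$, which is precisely what dictates the constant $2^{7k/2}$; ensure $c = o(n)$ and $m \to \infty$ so Lemma~\ref{lemma:inherit} applies; and choose $\beta$, exploiting the $2^{-3}$ factor built into $S_0$, so that Lemma~\ref{whp-proof}'s failure probability lands at a constant below $1$ and the self-referential inequality for $\Pr[E]$ can be solved. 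A smaller subtlety is the strict-versus-nonstrict comparison between ``more than $\beta n$ variables'' and the ``length $\ge \epsilon n'/2$'' guarantee, handled above by taking $\beta$ slightly below $\epsilon x/(k+\epsilon)$ (alternatively by using that $\epsilon xn/(k+\epsilon)$ is generically non-integral).
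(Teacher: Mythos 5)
Your proposal is correct and follows the paper's argument in all essentials: same choice $y = 2/(k+\epsilon)$ and $n' = 2xn/(k+\epsilon)$, same $m' = (1+\epsilon)c^{k-1}m$, same verification of the constant $2^{7k/2}$, and the same chain of Lemmas \ref{whp-proof}, \ref{sparsity}, \ref{lemma:inherit}, \ref{lemma:equiv}, and \ref{big-clause}. The one organizational difference is how you handle the restriction: the paper uses an averaging/Markov step to extract a single deterministic restriction $\rho'$ that works for a $1/2$-fraction of short-refutation formulas, then runs a conditional-probability calculation; you instead keep $\rho$ random (independent of $F$) and derive a self-referential inequality $\Pr[E] \le 2^{-7/4}\Pr[E] + o(1)$. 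These are logically interchangeable reformulations of the same probabilistic-method step, not a different decomposition. Your shift $\beta = \epsilon x/(k+\epsilon) - 1/t$ is a small clean-up the paper elides (it takes $\beta = \epsilon x/(k+\epsilon)$ exactly, leaving a strict-vs.-nonstrict mismatch with Lemma~\ref{big-clause}'s ``length $\ge \epsilon n'/2$'' at integer boundary points); your fix is correct and spends exactly the slack afforded by the $2^{-3}$ factor in $S_0$ so that the failure probability lands at the constant $2^{-7/4} < 1$.
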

\begin{proof}
Let $S = 2^{\frac{\epsilon}{4(k +
\epsilon)}xt} / 8$ and let $U$ be the set of unsatisfiable $k$-CNF formulas with $n$ variables and $m$ clauses. For each $\varphi \in U$ fix a shortest
resolution refutation $P_\varphi$, and let $W \subseteq U$ be the set of
$\varphi$ such that $|P_\varphi| \leq S$.
Let $R$ be the set of all restrictions of size $t$, and for any formula $\varphi$ and $\rho \in R$ let $L(\varphi, \rho)$ be the indicator function for the event that either $\varphi$ is satisfiable or $P_\varphi \lceil_\rho$ contains a clause of length at least $\epsilon x n / (k + \epsilon)$.
Now for any $\varphi \in W$, by Lemma \ref{whp-proof} with $\beta = \epsilon x / (k + \epsilon)$ we have
\[
\sum_\rho \frac{L(\varphi, \rho)}{|R|} \le 2^{1 - \frac{\epsilon}{4(k + \epsilon)} xt}S = 2^{1 - \frac{\epsilon}{4(k +
\epsilon)} xt}(2^{\frac{\epsilon}{4(k + \epsilon)} xt} / 8) = 1 / 4 .
\]
Let $X$ be a random variable defined over a restriction $\rho$ and equal to $\Pr_\varphi [ L(\varphi, \rho) | \varphi \in W]$, where $\varphi$ is distributed as $F_k(n,m,c,p)$.
Putting a uniform distribution on $\rho$ and writing $q(\psi)$ for the conditional distribution $\Pr_\varphi [ \varphi = \psi | \varphi \in W ]$,
\[
\E_\rho [ X ] = \sum_\rho \frac{1}{|R|} \Pr_\varphi [ L(\varphi, \rho)  | \varphi \in W ] = \sum_{\psi \in W} q(\psi) \left[ \sum_\rho \frac{L(\psi, \rho)}{|R|} \right] \le \sum_{\psi \in W} q(\psi) \frac{1}{4} = \frac{1}{4} .
\]
So by Markov's inequality,
\[
\Pr_\rho [X \geq 1 / 2] \leq \frac{\E_\rho[X]}{1 / 2} \leq 1 / 2 ,
\]
and therefore there is some $\rho'$ such that $\Pr_\varphi [ L(\varphi, \rho') | \varphi \in W ] \le 1/2$.
In other words, there is a restriction that eliminates large clauses from a random $\varphi \in W$ with probability at least $1/2$.

Now let $y = 2 / (k + \epsilon)$.
Since $k \ge 3$ and $\epsilon < 1$ we have $y \ge 1 / (k-1)$ and
\begin{align*}
\frac{y}{e^{1 + 1 / y}2^{k + 1 / y}} &= 2 \left[ (k+\epsilon) e^{1 + \frac{k+\epsilon}{2}} 2^{k + \frac{k+\epsilon}{2}} \right]^{-1} \ge 2 (k+\epsilon)^{-1} e^{-\frac{k}{2} - \frac{3}{2}} 2^{-\frac{3k}{2} - \frac{1}{2}} \\
&= 2 (k+\epsilon)^{-1} e^{-3/2} 2^{-1/2} 2^{- k (3 + \log_2 e) / 2} \\
&\ge 2 (k+1)^{-1} e^{-3/2} 2^{-1/2} 2^{- 2.23 k} \ge 2^{-1.23 k} 2^{- 2.23 k} \ge 2^{-7k/2} .
\end{align*}
By our assumption on $m$,
\begin{align*}
(1+\epsilon) c^{k-1} m &\le 2^{-7k/2} x^{-(k-2-\epsilon)/2} n = 2^{-7k/2} x^{1/y - (k-1)} n \\
&\le \frac{y}{e^{1 + 1 / y}2^{k + 1 / y}} x^{1/y - (k-1)} n .
\end{align*}
Finally, since $z$ is $\omega(1)$ we have $z \geq 4$ for sufficiently large $n$, and then all the conditions of Lemma \ref{sparsity} are satisfied by $y$, $z$, $t$,
and $m' = (1 + \epsilon)c^{k - 1}m$.
Therefore for a formula $\varphi$ drawn from $F_k(n, m')$, $\varphi \lceil_{\rho'}$ is simultaneously $(xn / 2, xn, 2 / (k + \epsilon))$-sparse and $xn$-sparse with probability at least $1 - 2^{-t} - (2^k + 1) / z^{k - 1}$.
Since $t$ and $z$ are $\omega(1)$, $\varphi$ has this property with high probability.
Furthermore, the property is inherited by subformulas, so by Lemma \ref{lemma:inherit} it also holds with high probability for formulas drawn from $\overline{F}_k(n, m, c, p; m')$.
Then by Lemma \ref{lemma:equiv} the same is true for formulas drawn from $F_k(n, m, c, p)$.

Now let $n' = 2xn / (k + \epsilon)$.
Since $k + \epsilon \ge 3$, we have $n' \le xn$ and so $xn$-sparsity implies $n'$-sparsity.
Also note that
\[
\frac{xn}{2} = \frac{2xn(k + \epsilon)}{4(k + \epsilon)} = \frac{n'(k + \epsilon)}{4}
\hspace{0.5cm}\text{and}\hspace{0.5cm}
xn = \frac{n'(k + \epsilon)}{2} .
\]
So by Lemma \ref{big-clause}, when drawing an unsatisfiable formula $\varphi$ from $F_k(n,m,c,p)$, with high probability every resolution refutation of $\varphi \lceil_{\rho'}$ has a clause of length at least $\epsilon n' / 2 = \epsilon xn / (k + \epsilon)$.
That is, $\Pr_\varphi [ L(\varphi, \rho') \st \varphi \in U ] \rightarrow 1$ as $n \rightarrow \infty$.
So
\[
\Pr_\varphi [ \varphi \in W | \varphi \in U] = \frac{\Pr_\varphi [\varphi \in W \land \overline{L(\varphi,\rho')} \;|\; \varphi \in U]}{\Pr_\varphi [\overline{L(\varphi,\rho')} \;|\; \varphi \in W]} \le \frac{\Pr_\varphi [\overline{L(\varphi, \rho')} \;|\; \varphi \in U]}{1/2} \rightarrow 0
\]
as $n \rightarrow \infty$.
Therefore with high probability, an unsatisfiable instance drawn from $F_k(n,m,c,p)$ does not have a resolution refutation of size $\le S$.
\qed
\end{proof}

Next we instantiate this general result to obtain exponential lower bounds for the refutation length when the number of communities is not too large.
We use slightly different arguments for $k \ge 4$ and $k = 3$, again following Beame and Pitassi \cite{beame}.
As the computations are uninteresting, we defer the proofs to Appendix \ref{sec:more-proofs}.
The basic idea is to let $x$ go to zero fast enough that the bound on $m$ required by Theorem \ref{main} is satisfied when $m = O(n)$, but slowly enough that the length bound is of the form $2^{O\left( n^\lambda \right)}$.
\begin{restatable}{theorem}{thmBoundKFour} \label{thm:k4}
Suppose that $k \ge 4$, $m = O(n)$, and $c = O(n^\alpha)$ for some $\alpha < \frac{k-2}{4(k-1)}$.
Then there is some $\lambda > 0$ so that with high probability, an unsatisfiable formula drawn from $F_k(n,m,c,p)$ does not have a resolution refutation of size $2^{O\left( n^\lambda \right)}$.
\end{restatable}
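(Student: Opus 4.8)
The plan is to instantiate Theorem~\ref{main} with $x$, $t$, and $z$ taken to be suitable powers of $n$ (with $z$ in fact only logarithmic), and then to read off the resulting size bound. First I would fix $k \ge 4$ and $\alpha < \frac{k-2}{4(k-1)}$, and choose a constant $\epsilon \in (0,1)$ small enough that $\frac{2\alpha(k-1)}{k-2-\epsilon} < \frac12$; this is possible because at $\epsilon = 0$ the quantity $\frac{2\alpha(k-1)}{k-2}$ is $<\frac12$ precisely by the hypothesis on $\alpha$, and it depends continuously on $\epsilon$. Then I would fix a constant $a$ with $\frac{2\alpha(k-1)}{k-2-\epsilon} < a < \frac12$ and set $x = n^{-a}$, so that $x \to 0$ while $xn \to \infty$.

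Next I would choose the restriction size $t$. Writing $y = 2/(k+\epsilon)$ as in Theorem~\ref{main}, one computes $1 - \frac{1}{y(k-1)} = \frac{k-2-\epsilon}{2(k-1)} > 0$, so with $x = n^{-a}$ the two quantities bounding $t$ in Lemma~\ref{sparsity} are $xn/2 = n^{1-a}/2$ and $x^{1-1/(y(k-1))}n^{1-1/(k-1)}/z = n^{\,(k-2)/(k-1) - a(k-2-\epsilon)/(2(k-1))}/z$. Since $a < \frac12$, both exponents here strictly exceed $a$ (for the second, the exponent is decreasing in $a$ and at $a=\frac12$ equals $\frac{3(k-2)+\epsilon}{4(k-1)} \ge \frac12$, using $k \ge 4$). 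Hence I can pick a constant $b$ with $a < b < \min\bigl\{1-a,\ \frac{k-2}{k-1} - \frac{a(k-2-\epsilon)}{2(k-1)}\bigr\}$ and take $t = \floor{n^{b}}$ and $z = \ceil{\log n}$; then $t$ and $z$ are $\omega(1)$, $c = O(n^\alpha)$ is $o(n)$, and $t$ satisfies the conditions of Lemma~\ref{sparsity} for all large $n$.

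It remains to check the bound on $m$ required by Theorem~\ref{main}. Since $c^{k-1} = O(n^{\alpha(k-1)})$ and $x^{-(k-2-\epsilon)/2} = n^{a(k-2-\epsilon)/2}$, the right-hand side $\frac{1}{2^{7k/2}(1+\epsilon)c^{k-1}}x^{-(k-2-\epsilon)/2}n$ is $\Omega\bigl(n^{\,1 + a(k-2-\epsilon)/2 - \alpha(k-1)}\bigr)$, and our choice $a > \frac{2\alpha(k-1)}{k-2-\epsilon}$ makes that exponent strictly greater than $1$; as $m = O(n)$, the required inequality holds for all sufficiently large $n$. Theorem~\ref{main} then yields that with high probability an unsatisfiable formula drawn from $F_k(n,m,c,p)$ has no resolution refutation of size $\le 2^{\frac{\epsilon}{4(k+\epsilon)}xt}/8$. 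Since $xt = n^{b-a}$ and $\frac{\epsilon}{4(k+\epsilon)}$ is a positive constant, setting $\lambda = b-a > 0$ gives the desired lower bound $2^{\Omega(n^\lambda)}$.

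I expect the only real obstacle to be verifying that the interval of admissible exponents is nonempty — that is, that the lower bound $a > \frac{2\alpha(k-1)}{k-2-\epsilon}$ forced by the constraint on $m$ is compatible with the upper bound $a < \frac12$ needed to keep $xt = \Omega(n^\lambda)$ for some $\lambda > 0$. This compatibility is exactly what the hypothesis $\alpha < \frac{k-2}{4(k-1)}$ buys, once $\epsilon$ is taken small. The remaining steps — the two sparsity bounds on $t$, the $\omega(1)$ and $o(n)$ conditions, and rewriting $2^{\frac{\epsilon}{4(k+\epsilon)}xt}/8$ as $2^{\Omega(n^\lambda)}$ — are the routine arithmetic the paper already flags as uninteresting.
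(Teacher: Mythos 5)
Your proposal is correct and takes essentially the same approach as the paper: both instantiate Theorem~\ref{main} with $x$ and $t$ chosen as powers of $n$, verify the two sparsity bounds on $t$ and the bound on $m$, and read off a positive $\lambda$ from the exponent of $xt$. The only cosmetic difference is that you parameterize directly by the exponents $a$ and $b$ of $x$ and $t$ (taking $t$ strictly below its maximum allowed value $xn/2$ and $z = \lceil \log n\rceil$), whereas the paper parameterizes by the target exponent $\lambda$, sets $t = xn/2$ exactly, and leaves $z$ implicit; both bookkeeping choices work, and the algebraic content (that $\alpha < \frac{k-2}{4(k-1)}$ makes the interval of admissible exponents nonempty for small $\epsilon$) is identical.
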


\vspace{-1.5ex} 
\begin{restatable}{theorem}{thmBoundKThree} \label{thm:k3}
Suppose $m = O(n)$ and $c = O(n^\alpha)$ for some $\alpha < 1/10$.
Then there is some $\lambda > 0$ so that with high probability, an unsatisfiable formula drawn from $F_3(n, m, c, p)$ does not have a resolution refutation of size $2^{O\left(n^\lambda\right)}$.
\end{restatable}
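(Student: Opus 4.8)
The plan is to instantiate Theorem~\ref{main} with $k=3$, choosing the free functions $x,t,z$ and the constant $\epsilon$ so that the bound on $m$ required by the theorem accommodates $m = O(n)$ while the guaranteed lower bound on refutation size is of the form $2^{\Omega(n^\lambda)}$. Recall that the proof of Theorem~\ref{main} fixes $y = 2/(k+\epsilon)$; for $k=3$ this gives $1/(y(k-1)) = (3+\epsilon)/4$ and $1 - 1/(k-1) = 1/2$, so the sparsity bound on $t$ coming from Lemma~\ref{sparsity} scales like $x^{(1-\epsilon)/4}\,n^{1/2}/z$. I would set $x = n^{-\gamma}$ for a constant $\gamma > 0$ to be chosen, $z = \lceil \log n\rceil$ (a slowly growing $\omega(1)$ function), and $t = \lfloor x^{(1-\epsilon)/4} n^{1/2}/z \rfloor = \Theta\!\left(n^{1/2 - \gamma(1-\epsilon)/4}/\log n\right)$, so that $t$ automatically satisfies the conditions of Lemma~\ref{sparsity} (the other bound $xn/2$ is the larger of the two once $\gamma < 2/(3+\epsilon)$).

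Two opposing requirements then pin down $\gamma$. The bound on $m$ in Theorem~\ref{main} is $\Theta\!\left(c^{-2} x^{-(1-\epsilon)/2} n\right) = \Omega\!\left(n^{\,1 - 2\alpha + \gamma(1-\epsilon)/2}\right)$ using $c = O(n^\alpha)$; for this to eventually exceed $m = O(n)$ the exponent must exceed $1$, i.e.\ $\gamma > 4\alpha/(1-\epsilon)$ (``$x$ goes to zero fast enough''). On the other hand, the refutation-size lower bound from Theorem~\ref{main} is $2^{\frac{\epsilon}{4(3+\epsilon)}xt}/8$ with $xt = \Theta\!\left(n^{1/2 - \gamma(5-\epsilon)/4}/\log n\right)$, so to make it $2^{\Omega(n^\lambda)}$ we need $1/2 - \gamma(5-\epsilon)/4 > 0$, i.e.\ $\gamma < 2/(5-\epsilon)$ (``$x$ goes to zero slowly enough''). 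These two conditions on $\gamma$ are jointly feasible exactly when $4\alpha/(1-\epsilon) < 2/(5-\epsilon)$, i.e.\ $\alpha < \frac{1-\epsilon}{2(5-\epsilon)}$; since this right-hand side tends to $1/10$ as $\epsilon \to 0$ and we are given $\alpha < 1/10$, I would first pick $\epsilon \in (0,1)$ small enough for it to hold and then pick any $\gamma$ in the resulting nonempty interval $\left(4\alpha/(1-\epsilon),\, 2/(5-\epsilon)\right)$.

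With these choices I would verify the hypotheses of Theorem~\ref{main}: $x > 0$; $c = O(n^\alpha)$ is $o(n)$ since $\alpha < 1$; $z$ is $\omega(1)$ and $t$ is $\omega(1)$ because its exponent $1/2 - \gamma(1-\epsilon)/4$ is positive (as $\gamma < 2/(5-\epsilon) < 2/(1-\epsilon)$); and $t$ meets Lemma~\ref{sparsity}'s constraints by construction. The $m$-bound holds for all large $n$ by the exponent estimate above. Theorem~\ref{main} then gives that with high probability an unsatisfiable $\varphi \sim F_3(n,m,c,p)$ has no resolution refutation of size $\le 2^{\frac{\epsilon}{4(3+\epsilon)}xt}/8$. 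Since $\frac{\epsilon}{4(3+\epsilon)}xt = \Omega(n^{\mu}/\log n)$ with $\mu := 1/2 - \gamma(5-\epsilon)/4 > 0$, this forbidden size is at least $2^{\Omega(n^{\lambda})}$ for any fixed $\lambda \in (0,\mu)$, which establishes the theorem with such a $\lambda$.

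The computations here are entirely routine once Theorem~\ref{main} is available; the only real care needed is the bookkeeping that isolates the two opposing constraints on $\gamma$, shows their feasibility region is governed by $\alpha < \frac{1-\epsilon}{2(5-\epsilon)}$, and observes that the supremum of this threshold over $\epsilon \in (0,1)$ is exactly $1/10$ --- which is the source of the numerical bound in the statement. (The $k \ge 4$ case of Theorem~\ref{thm:k4} runs through the same steps with the same $y = 2/(k+\epsilon)$, the analogous accounting there yielding the threshold $\frac{k-2}{4(k-1)}$.)
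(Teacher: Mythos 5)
Your proposal is correct and follows essentially the same route as the paper's proof: instantiate Theorem~\ref{main} with $x$ a negative power of $n$, take $t$ to be the sparsity bound $x^{(1-\epsilon)/4}n^{1/2}/z$ from Lemma~\ref{sparsity} divided by a slowly-growing $z$, and then balance the two constraints on the exponent of $x$ (the $m$-bound versus the positivity of the exponent of $xt$). The only difference is cosmetic parameterization --- the paper hard-codes $x(n) = (n^{-\gamma - \alpha(5-\epsilon)/(1-\epsilon)})^{4/(5-\epsilon)}$ and $z(n)=n^\gamma$ so the $m$-constraint is satisfied automatically, while you keep $x = n^{-\gamma}$ and isolate the feasibility window $4\alpha/(1-\epsilon) < \gamma < 2/(5-\epsilon)$, which makes the origin of the $1/10$ threshold ($\alpha < \tfrac{1-\epsilon}{2(5-\epsilon)} \to 1/10$) somewhat more transparent.
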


\subsection{Deducing a Lower Bound on CDCL Runtime} \label{sec:cdcl-runtime}

Finally, we can conclude that unsatisfiable random instances from $F_k(n,m,c,p)$ with sufficiently few communities usually take exponential time for CDCL to solve.
\begin{theorem}
If $m = O(n)$ and $c = O(n^\alpha)$ for any $\alpha < 1/10$, the runtime of CDCL on an unsatisfiable formula $\varphi$ from $F_k(n,m,c,p)$ is exponential with high probability.
\end{theorem}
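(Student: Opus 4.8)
The plan is to derive this essentially as a corollary of the resolution lower bounds already established (Theorems~\ref{thm:k4} and~\ref{thm:k3}) together with the standard fact, recalled in Section~\ref{sec:background}, that CDCL implicitly constructs a resolution refutation of any unsatisfiable formula it is run on, so that its runtime is at least the length of the shortest such refutation~\cite{cdcl-sim}. The only real work is to verify that the hypotheses of those two theorems are met for every $k \ge 3$ once we assume $m = O(n)$ and $c = O(n^\alpha)$ with $\alpha < 1/10$, and then to translate ``no short resolution refutation'' into ``long CDCL runtime''.

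First I would split into the cases $k = 3$ and $k \ge 4$. For $k = 3$, Theorem~\ref{thm:k3} applies directly under the stated hypotheses, producing some $\lambda > 0$ such that with high probability an unsatisfiable $\varphi$ drawn from $F_3(n,m,c,p)$ has no resolution refutation of size $2^{O(n^\lambda)}$. For $k \ge 4$, I would observe that $\frac{k-2}{4(k-1)}$ is increasing in $k$ and equals $\frac{1}{6}$ at $k = 4$, so $\alpha < 1/10 < \frac{1}{6} \le \frac{k-2}{4(k-1)}$; hence the hypothesis of Theorem~\ref{thm:k4} is satisfied and it again yields some $\lambda > 0$ with the same conclusion. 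So in all cases, with high probability the shortest resolution refutation of an unsatisfiable $\varphi$ from $F_k(n,m,c,p)$ has size $2^{\Omega(n^\lambda)}$.

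Finally I would invoke the CDCL-to-resolution simulation: any run of CDCL on an unsatisfiable formula --- regardless of the decision, restart, and clause-learning heuristics, and unaffected by resolution-preserving pre-/inprocessing --- writes down a resolution refutation whose length is at most the number of steps taken~\cite{cdcl-sim}. Combining this with the previous paragraph, with high probability CDCL on an unsatisfiable formula from $F_k(n,m,c,p)$ must take at least $2^{\Omega(n^\lambda)}$ steps, i.e., exponential time.

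I do not expect a genuine obstacle: the theorem is just an assembly of pieces already in place. The only points requiring a little care are the numeric comparison $1/10 < \frac{k-2}{4(k-1)}$ for $k \ge 4$, which lets the single hypothesis $\alpha < 1/10$ serve uniformly in $k$, and --- as already discussed in Section~\ref{sec:background} --- the fact that the resolution lower bound genuinely bounds CDCL runtime from below even in the presence of restarts and the standard preprocessing techniques.
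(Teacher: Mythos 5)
Your proposal is correct and follows essentially the same route as the paper: split on $k=3$ versus $k\ge 4$, invoke Theorems~\ref{thm:k3} and~\ref{thm:k4} respectively (checking $1/10 < \frac{k-2}{4(k-1)}$ for $k\ge 4$), and then pass the resolution lower bound to a CDCL runtime bound via the simulation result of~\cite{cdcl-sim}. The only minor difference is that you spell out the monotonicity of $\frac{k-2}{4(k-1)}$ and its value $1/6$ at $k=4$, which the paper leaves implicit.
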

\begin{proof}
If $\varphi$ is unsatisfiable, the runtime of CDCL on $\varphi$ is lower bounded (up to a polynomial factor) by the length of the shortest resolution refutation of $\varphi$ \cite{cdcl-sim}.
If $k = 3$, then the shortest refutation of $\varphi$ is exponentially long with high probability by Theorem \ref{thm:k3}.
If instead $k \ge 4$, the same is true by Theorem \ref{thm:k4}, since $1/10 < \frac{k-2}{4(k-1)}$.
Therefore with high probability, CDCL will take exponential time to prove $\varphi$ unsatisfiable.
\qed
\end{proof}

\begin{remark}
By picking a sufficiently high clause-variable ratio, we can ensure $\varphi$ is unsatisfiable with high probability, so that CDCL takes exponential time on average for formulas drawn from $F_k(n,m,c,p)$ (not just the unsatisfiable ones).
\end{remark}

We also note that our proof technique is not sensitive to the details of how the community attachment model is defined.
For example, changing the definition of a bridge clause so that the variables do not all have to be in different communities requires only minor changes to the proof (detailed in Appendix \ref{sec:more-proofs}).
\begin{restatable}{theorem}{thmModifiedModel}
Let $\widetilde{F}_k(n,m,c,p)$ be the community attachment model modified so that any clause that is not within a single community counts as a bridge clause.
Then if $m = O(n)$ and $c = O(n^\alpha)$ for any $\alpha < 1/10$, the runtime of CDCL on an unsatisfiable formula $\varphi$ from $\widetilde{F}_k(n,m,c,p)$ is exponential with high probability.
\end{restatable}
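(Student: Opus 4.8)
The plan is to rerun the entire pipeline of Sections~\ref{sec:defining-distribution}--\ref{sec:cdcl-runtime} essentially verbatim, with the single change that the sampling procedure of Algorithm~\ref{mod_levy} is built around the modified notion of bridge clause. Concretely, one replaces the assignment $b \gets (n/c)^k\binom{c}{k}/\binom{n}{k}$ by $b \gets 1 - h$, where $h = c\binom{n/c}{k}/\binom{n}{k}$ as before: under the modified model, the probability that a uniformly random $k$-clause is a bridge clause is exactly the complement of the probability that it lies within a community, rather than the probability that all its variables land in distinct communities. The ``otherwise'' branch then keeps a clause (provided it is a bridge clause in the modified sense) with probability $h/b = h/(1-h)$.

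First I would establish the analog of Lemma~\ref{lemma:well-defined}, which now only requires checking $h/(1-h) \le 1$, i.e.\ $h \le 1/2$. This is immediate from the inequality $h\,c^{k-1} \le 1$ already proved inside Lemma~\ref{lemma:success}: it gives $h \le c^{-(k-1)} \le 2^{-(k-1)} \le 1/2$ since $c \ge k \ge 2$. Termination and the fact that the returned formula is a $k$-CNF with $n$ variables and $m$ clauses are unaffected.

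Next I would redo the equivalence (analog of Lemma~\ref{lemma:equiv}). The conditioning argument is unchanged, and the counting works out the same way: now $\Pr[\cbridge] = 1 - h = b$ with the modified definition, so $\Pr[\cadded \st \overline{A}] = (h/b)\Pr[\cbridge] = \big(h/(1-h)\big)(1-h) = h$, whence $\Pr[\cadded] = hp + h(1-p) = h$ and $\Pr[\ccomm \st \cadded] = p$ exactly as before. The one genuinely new observation needed here is that a uniformly random $k$-clause of a formula from $F_k(n,m')$, conditioned on not lying within a single community, is uniformly distributed over modified bridge clauses; this holds by symmetry, since every $k$-subset of variables is equally likely. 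Hence the modified algorithm samples precisely $\widetilde{F}_k(n,m,c,p)$.

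From here nothing else changes. Lemmas~\ref{lemma:success} and~\ref{lemma:inherit} use only that a clause of $\phi$ survives into $\psi$ with probability $h$ (still true), that $h\,c^{k-1} \to 1$ (still true, same computation), and that the output is a subformula of $\phi \sim F_k(n,m')$ (still true); and Theorems~\ref{main}, \ref{thm:k4}, \ref{thm:k3} together with the final CDCL corollary invoke those lemmas as black boxes alongside the Beame--Pitassi Lemmas~\ref{big-clause}--\ref{sparsity}, which are applied to $F_k(n,m')$ itself and are indifferent to any community structure. So the exponential CDCL lower bound for $\widetilde{F}_k(n,m,c,p)$ under $m = O(n)$ and $c = O(n^\alpha)$ with $\alpha < 1/10$ follows word for word. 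The only steps demanding any care are the verification that $h/(1-h) \le 1$ and the symmetry observation in the equivalence step; everything downstream is mechanical substitution.
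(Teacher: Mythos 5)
Your proposal is correct and follows essentially the same strategy as the paper: modify Algorithm~\ref{mod_levy} to set $b = 1 - h$, re-verify Lemmas~\ref{lemma:well-defined}--\ref{lemma:inherit} under this change, and observe that everything downstream (Theorem~\ref{main} and its instantiations) goes through verbatim. The only cosmetic difference is in re-proving Lemma~\ref{lemma:well-defined}: you bound $h \le 1/2$ directly from $hc^{k-1}\le 1$ and $c\ge k\ge 2$, whereas the paper notes that the modified set of bridge clauses is a superset of the original, so the new $b$ can only be larger and $h/b\le 1$ is inherited; both arguments are valid.
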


Thus we have showed that similarly to unsatisfiable random $k$-{\SAT} instances, unsatisfiable random \emph{modular} instances (as formalized by the community attachment model) are hard on average for CDCL as long as they do not have too many communities.

\section{Discussion} \label{sec:discussion}

We have introduced a broad class of ``modularity-like'' graph metrics, the polynomial clique metrics, and showed that no PCM can be a guaranteed indicator of whether a {\SAT} instance is easy (unless $\mathsf{P}=\NP$).
This is perhaps not too surprising in light of the fact that the VIG throws away the Boolean information in the formula.
While the VIG has received the most attention in recent work on community structure, it would be worthwhile to investigate other graph encodings that preserve more information.
Regardless, our result does indicate that it may be difficult to define a tractable class of {\SAT} instances based purely on modularity or its variants.
Furthermore, by setting up a concrete barrier (the PCM property) that must be avoided to obtain a tractable class, our result can help guide future attempts to find a graph metric that does work.

Our result on the community attachment model $F_k(n,m,c,p)$ is more interesting, as it shows that instances from this model are exponentially hard for CDCL even on average (when $c$ is small enough).
An important point is that the result is actually nontrivial when $p < 1$, unlike for $p = 1$.
In the latter case there are no bridge clauses, so the instances consist of $c$ independent problems of size $n/c$, and since we assume $c = O(n^{1/10})$ each problem has size $\Omega(n^{9/10})$.
So by the old results on random $k$-{\SAT} CDCL would take exponential time to solve even the easiest problem, and so likewise for the original instance (with a slightly smaller exponent on $n$).
On the other hand, when $p < 1$ it is conceivable that the bridge clauses could actually make the instances easier, by adding some extra propagation power or easier-to-find contradictions that would make the whole instance easier to solve than any individual community.
Our result effectively says that this happens with vanishing probability as $n \rightarrow \infty$.

The case $p = 1$ also brings out an important caveat when interpreting our result as evidence that community structure doesn't explain CDCL's effectiveness on industrial instances.
Our result shows that such structure isn't enough to bring random formulas down from exponential-time-on-average to polynomial-time-on-average.
However, it could decrease the time from (say) $2^{n^{1/2}}$ to $2^{n^{1/4}}$, which could be the difference between intractability and tractability if $n$ is small enough.
On the other hand, given the enormous size of many industrial instances it isn't clear whether this is really all that is happening.
It would be interesting to do experiments on parametrized families of industrial instances to see whether CDCL actually avoids exponential behavior, or if the point of blow-up is just pushed out far enough that we tend not to encounter it in practice.

Another important aspect of our result is the limit on the number of communities.
It does not apply when communities have logarithmic size, for example, so that $c = \Theta(n / \log n)$.
In fact it is easy to see that the result cannot hold in this case: if one of the communities is unsatisfiable then it will have a polynomial-length resolution refutation, and as $c \rightarrow \infty$ the probability that at least one community is unsatisfiable by itself goes to $1$.
So with high probability the entire instance has a short refutation, and CDCL could in theory solve it in polynomial time.
A clear direction for future work is to see whether improved proof techniques can extend our results to larger numbers of communities, closing the gap between $O(n^{1/10})$ and $\Omega(n / \log n)$.
This is also another way our results can inform future experiments: it would be interesting to explore a variety of growth rates for $c$ above $n^{1/10}$ and see how the performance of CDCL changes.

We have done some preliminary experiments along these lines, sampling instances from $F_3(n,5n,5n^\alpha,0.9)$ for a variety of values of $n$ and solving them with MiniSat 2.2.0 \cite{minisat}.
For each value of $n$ we generated 10 instances using the generator from \cite{giraldez2015modularity} and averaged their runtimes.
Note that every instance had at least 10 communities, so that the expected modularity was at least $p - (1/c) \ge 0.9 - 0.1 = 0.8$.
In Figure \ref{fig:runtime-community-size}, we plot the results as a function of the community size $n / (5n^\alpha) = n^{1-\alpha} / 5$.
\begin{figure}[ht]
\centering
\includegraphics[width=\textwidth]{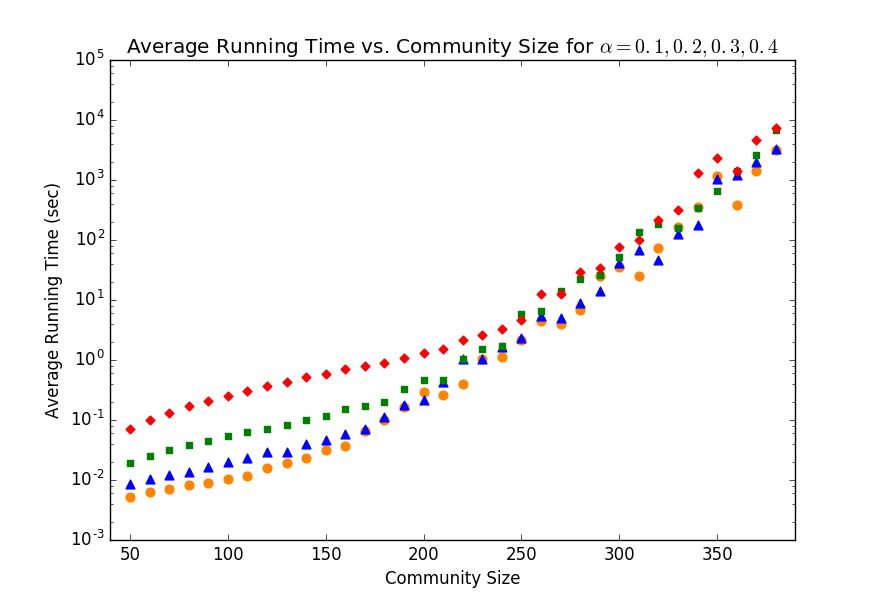}
\caption{Average MiniSat runtimes for instances from $F_3(n,5n,5n^\alpha,0.9)$ for $\alpha = 0.1$ (orange), $0.2$ (blue), $0.3$ (green), and $0.4$ (red), plotted as a function of community size ($n^{1-\alpha}/5$).}
\label{fig:runtime-community-size}
\end{figure}
It is clear from the right half of the graph\footnote{On the left half the growth is much slower and close to linear, but since this occurs only for runtimes on the order of a second or less it may be that parsing the formula and initializing the solver dominate the time needed for the actual search.} that the runtime blows up exponentially in the community size (and thus in $n$) even for $\alpha$ significantly larger than $1/10$.
This suggests that improving our result to larger $\alpha$ is likely possible.
However, it is important to point out that although all values of $\alpha$ are undergoing exponential growth, the lines for the different values of $\alpha$ are quite close together, indicating that community size is a much more important factor in determining runtime than the total formula size.
For example, at a community size of around 270 the $\alpha=0.1$ instances have 3,000 variables total, while the $\alpha=0.4$ instances have 165,000.
So the latter are more than 50 times larger than the former, but their runtimes are only about 3 times longer.
This shows that while larger values of $\alpha$ do not avoid exponential blowup, they can significantly aid performance.

In total, our results indicate that high modularity alone may not be adequate to ensure good performance even on average, but that it could be rewarding to investigate more refined notions of ``good community structure'' that somehow restrict the number of communities.

\noindent
{\bf Acknowledgments.} 
The authors thank Vijay Ganesh, Holger Hoos, Zack Newsham, Markus Rabe, Stefan Szeider, and several anonymous reviewers for helpful discussions and comments.
This work is supported in part by the National Science Foundation Graduate Research Fellowship Program under Grant No. DGE-1106400, by the Hellman Family Faculty Fund, by gifts from Microsoft and Toyota, and by TerraSwarm, one of six centers of STARnet, a Semiconductor Research Corporation program sponsored by MARCO and DARPA.

\bibliographystyle{splncs}
\bibliography{main}

\begin{thebibliography}{10}
\providecommand{\url}[1]{\texttt{#1}}
\providecommand{\urlprefix}{URL }

\bibitem{best-mod}
Almeida, H., Guedes, D., Meira~Jr, W., Zaki, M.J.: Is there a best quality
  metric for graph clusters? In: Machine Learning and Knowledge Discovery in
  Databases, pp. 44--59. Springer (2011)

\bibitem{community-sat}
Ans{\'{o}}tegui, C., Gir{\'{a}}ldez{-}Cru, J., Levy, J.: The community
  structure of {SAT} formulas. In: Theory and Applications of Satisfiability
  Testing - {SAT} 2012 - 15th International Conference. pp. 410--423 (2012)

\bibitem{bayardo-schrag}
Bayardo~Jr., R.J., Schrag, R.: Using {CSP} look-back techniques to solve
  real-world {SAT} instances. In: Proceedings of the Fourteenth National
  Conference on Artificial Intelligence and Ninth Innovative Applications of
  Artificial Intelligence Conference. pp. 203--208 (1997)

\bibitem{cdcl-sim}
Beame, P., Kautz, H.A., Sabharwal, A.: Understanding the power of clause
  learning. In: IJCAI-03, Proceedings of the Eighteenth International Joint
  Conference on Artificial Intelligence. pp. 1194--1201 (2003)

\bibitem{beame}
Beame, P., Pitassi, T.: Simplified and improved resolution lower bounds. In:
  Foundations of Computer Science, 1996. Proceedings., 37th Annual Symposium
  on. pp. 274--282. IEEE (1996)

\bibitem{exponential-resolution}
Chv{\'{a}}tal, V., Szemer{\'{e}}di, E.: Many hard examples for resolution. J.
  {ACM}  35(4),  759--768 (1988)

\bibitem{dpll}
Davis, M., Logemann, G., Loveland, D.: A machine program for theorem-proving.
  Communications of the ACM  5(7),  394--397 (1962)

\bibitem{var-elim}
E{\'{e}}n, N., Biere, A.: Effective preprocessing in {SAT} through variable and
  clause elimination. In: Theory and Applications of Satisfiability Testing,
  8th International Conference. pp. 61--75 (2005)

\bibitem{minisat}
E\'en, N., S\"orensson, N.: {MiniSat}, \url{http://minisat.se/}

\bibitem{community-history}
Fortunato, S.: Community detection in graphs. Physics Reports  486(3–-5),
  75--174 (2010)

\bibitem{ganian-szeider}
Ganian, R., Szeider, S.: Community structure inspired algorithms for {SAT} and
  \#{SAT}. In: Theory and Applications of Satisfiability Testing--SAT 2015, pp.
  223--237. Springer (2015)

\bibitem{giraldez2015modularity}
Gir{\'a}ldez-Cru, J., Levy, J.: A modularity-based random {SAT} instances
  generator. In: 24th Int. Joint Conf. on Artificial Intelligence, IJCAI'15
  (2015)

\bibitem{empirical-bd}
Gregory, P., Fox, M., Long, D.: A new empirical study of weak backdoors. In:
  Principles and Practice of Constraint Programming, 14th International
  Conference, {CP}. pp. 618--623 (2008)

\bibitem{bce}
J{\"{a}}rvisalo, M., Biere, A., Heule, M.: Blocked clause elimination. In:
  Tools and Algorithms for the Construction and Analysis of Systems, 16th
  International Conference. pp. 129--144 (2010)

\bibitem{bb-and-bd}
Kilby, P., Slaney, J.K., Thi{\'{e}}baux, S., Walsh, T.: Backbones and backdoors
  in satisfiability. In: Proceedings, The Twentieth National Conference on
  Artificial Intelligence and the Seventeenth Innovative Applications of
  Artificial Intelligence Conference. pp. 1368--1373 (2005)

\bibitem{feature-models}
Liang, J.H., Ganesh, V., Czarnecki, K., Raman, V.: {SAT}-based analysis of
  large real-world feature models is easy. In: Proceedings of the 19th
  International Software Product Line Conference, {SPLC}. pp. 91--100 (2015)

\bibitem{ms-applications}
Marques-Silva, J.: Practical applications of {Boolean} satisfiability. In:
  Proceedings of the 9th International Workshop on Discrete Event Systems. pp.
  74--80 (2008)

\bibitem{sh-cdcl}
Marques-Silva, J., Lynce, I., Malik, S.: Conflict-driven clause learning {SAT}
  solvers. In: Biere, A., Heule, M., van Maaren, H., Walsh, T. (eds.) Handbook
  of Satisfiability, Frontiers in Artificial Intelligence and Applications,
  vol. 185, chap.~4. {IOS} Press (2009)

\bibitem{grasp}
Marques-Silva, J.P., Sakallah, K.A.: {GRASP} - a new search algorithm for
  satisfiability. In: {ICCAD}. pp. 220--227 (1996)

\bibitem{treewidth-industrial}
Mateescu, R.: Treewidth in industrial {SAT} benchmarks. Tech. Rep.
  MSR-TR-2011-22, Microsoft Research (February 2011),
  \url{\url{http://research.microsoft.com/pubs/145390/MSR-TR-2011-22.pdf}}

\bibitem{backbones}
Monasson, R., Zecchina, R., Kirkpatrick, S., Selman, B., Troyansky, L.:
  Determining computational complexity from characteristic ‘phase
  transitions’. Nature  400(6740),  133--137 (1999)

\bibitem{modularity}
Newman, M.E., Girvan, M.: Finding and evaluating community structure in
  networks. Physical Review E  69(2),  026113 (2004)

\bibitem{newsham}
Newsham, Z., Ganesh, V., Fischmeister, S., Audemard, G., Simon, L.: Impact of
  community structure on {SAT} solver performance. In: Theory and Applications
  of Satisfiability Testing, 17th International Conference. pp. 252--268 (2014)

\bibitem{resolution}
Robinson, J.A.: A machine-oriented logic based on the resolution principle.
  Journal of the ACM (JACM)  12(1),  23--41 (1965)

\bibitem{sh-fpt}
Samer, M., Szeider, S.: Fixed-parameter tractability. In: Biere, A., Heule, M.,
  van Maaren, H., Walsh, T. (eds.) Handbook of Satisfiability, Frontiers in
  Artificial Intelligence and Applications, vol. 185, chap.~13. {IOS} Press
  (2009)

\bibitem{eda-applications}
Vizel, Y., Weissenbacher, G., Malik, S.: Boolean satisfiability solvers and
  their applications in model checking. Proceedings of the {IEEE}  103(11),
  2021--2035 (2015)

\bibitem{backdoors}
Williams, R., Gomes, C.P., Selman, B.: Backdoors to typical case complexity.
  In: IJCAI-03, Proceedings of the Eighteenth International Joint Conference on
  Artificial Intelligence. pp. 1173--1178 (2003)

\end{thebibliography}

\newpage

\appendix

\section{Other Clustering Metrics} \label{sec:other-metrics}

In this section we give additional evidence that the PCMs form a broad and interesting class of metrics by showing that several popular clustering metrics other than modularity are PCMs.
While these metrics have not previously been used in the context of {\SAT}, they have been widely used in other fields to measure the quality of vertex partitions \cite{best-mod}.
Note that when necessary, we have slightly adjusted the definitions to yield values in $[0,1]$ so that they are graph metrics according to the definition in Section \ref{sec:worst-case}.

First we establish some common notation.
\begin{definition}
Let $P$ be a path in a weighted graph from $s$ to $t$.
Then
\[
d(s, t) = \min_P \sum_{(u, v) \in P} w(u, v) 
\]
is the weight of the shortest path from $s$ to $t$.
If there is no path from $s$ to $t$, let $d(s, t) = \infty$, defining for convenience that $\infty / \infty = 1$.
\end{definition}
\begin{definition}
For any vertex partition $\delta$, the \textbf{community} $C_v$ of $v$ is the equivalence class of $v$ under $\delta$.
\end{definition}
In this section the variable $\delta$ always ranges over vertex partitions, which when convenient we view as a set of communities.

Now we can define the \emph{silhouette index}, which measures how separated the communities are from each other \cite{best-mod}.
\begin{definition}
Let
\[
a(v) = \frac{1}{|C_v|} \sum_{t \in C_v, t \ne v} d(v, t)
\]
be the average distance between $v$ and the other vertices in the same community.
Let
\[
b(v) = \min_{C_v' \not = C_v} \frac{1}{|C_{v'}|} \sum_{t \in C_{v'}} d(v, t)
\]
be the average distance between $v$ and the vertices in the closest other community.
Then the \textbf{silhouette index} of a graph $G = (V, E)$ is
\[
S = \frac{1}{2} \left(1 + \max_\delta \frac{1}{|V|} \sum_v \frac{b(v) - a(v)}{\max(a(v), b(v))} \right).
\]
\end{definition}

Next we define the (external) \emph{conductance}, which compares the weight of the edges spanning communities to the weight of the edges within communities \cite{best-mod}.
\begin{definition}
If for any $S \subseteq V$ we define
\[
r(S) = \sum_{x \in S} \sum_{y \in V} w(x, y) ,
\]
then the \textbf{conductance} of $G$ is
\[
R = 1 - \max_{\delta} \frac{1}{|\delta|} \sum_{C \in \delta} \frac{\sum_{x \in C} \sum_{y \not \in C} w(x, y)}{\min(r(C), r(V \setminus C))}.
\]
\end{definition}

Another simple metric is \emph{coverage}, which compares the weight of the edges within communities to the total weight of the graph \cite{best-mod}.
\begin{definition} The \textbf{coverage} of a graph is
\[
Cov = \max_\delta \frac{\sum_{u \in V} \sum_{v \in C_u} w(u, v) }{\sum_{u, v \in V} w(u, v)}.
\]
\end{definition}

Finally, we define \emph{performance}, which is a sum of two terms: the number of edges within communities, and the number of \emph{missing} edges between communities \cite{best-mod}.
\begin{definition}
The \textbf{performance} of an unweighted graph $G = (V, E)$ is
$$\text{Perf} = \max_\delta \frac{|\{(u, v) \in E :  u \in C_v\}|  + |\{(u, v) \not \in E : u \not \in C_v\}|}{n(n - 1)}.$$
\end{definition}
Note that we only consider the unweighted version of performance, as weighted versions require additional contextual information in the form of reasonable guesses for the weights of missing edges.
For all the other metrics above, an unweighted version can be obtained simply by assuming unit weights.

Now we prove that all of these metrics are PCMs.
In fact this is for a trivial reason: they all consider a single complete graph to be well-clustered.
So an attempt to use them as difficulty metrics in the context of {\SAT} would need to change their definitions, for example by maximizing only over $\delta$ with at least some minimum number of communities.
\otherMetricsPCM*
\begin{proof}[sketch]
Take $c = 1$, so that $G = K_n$.
Consider a partition $\delta$ which puts all vertices into the same community.
Then there are no vertices outside of that community, so the silhouette index is 1.
Similarly, there are no edges between communities, so the conductance and coverage are both 1.
Finally, since there are $n(n-1)/2$ edges inside the community, the performance is 1.
\qed
\end{proof}

\section{Deferred Proofs} \label{sec:more-proofs}

\thmBoundKFour*
\begin{proof}
Fix some $\lambda, \epsilon > 0$ which we will require to be sufficiently small later, and define $x(n) = n^{(\lambda - 1) / 2}$.
To apply Theorem \ref{main} we must have $t(n) = \omega(1)$ and 
\[
t(n) \leq \min\{x(n) n / 2, s(n) / z(n) \}
\]
where $s(n) = x(n)^{1 - 1 / y(k - 1)}n^{1 - 1 / (k - 1)}$.
So we set $t(n) = x(n) n / 2$ and prove that for an appropriate $z(n) = \omega(1)$, this is less than $s(n) / z(n)$ for sufficiently large $n$.
First, note that
\[
x(n)^{ - 1 / y} = n^{\frac{-(\lambda - 1)}{2y}} = n^{\frac{(1 - \lambda)(k + \epsilon)}{4}} = \omega(n)
\]
for sufficiently small $\lambda$, since $k \ge 4$ and $\epsilon > 0$.
Therefore,
\[
s(n) = x(n)^{1 - \frac{1}{y(k - 1)}} n^{1 - \frac{1}{k - 1}} = x(n) \left( x(n)^{-1/y} \right)^{1/(k-1)} n^{1 - \frac{1}{k - 1}} \in \omega \left( x(n) n \right).
\]
Since $t(n) = x(n) n / 2$, there is some $z(n)$ in $\omega(1)$ such that $s(n) / z(n) \geq t(n)$ for sufficiently large $n$.
Finally,
\[
t(n) = n^{(\lambda - 1) / 2}n / 2 = \Theta\left(n^{(\lambda + 1) / 2}\right) \subseteq \omega(1)
\]
since $\lambda > 0$.
Thus we have satisfied all the conditions of Theorem \ref{main}.
Now observe that
\begin{align*}
\frac{1}{2^{7k / 2} (1 + \epsilon)c^{k - 1}}x(n)^{-(k - 2 - \epsilon) / 2}n &= \Omega \left(\frac{1}{n^{\alpha(k - 1)}}n^{-(\lambda - 1)(k - 2 - \epsilon) / 4}n \right) \\
&= \Omega \left(n^{1 + \frac{(1 - \lambda)(k - 2 - \epsilon)}{4} - \alpha(k-1)} \right)\\
&= \omega \left(n^{1 + \frac{(1 - \lambda)(k - 2 - \epsilon)}{4} - \frac{k-2}{4}} \right) \\
&= \omega \left(n^{1 + \frac{1}{4} \left[ (1 - \lambda)(k - 2 - \epsilon) - (k-2) \right]} \right) \\
&= \omega(n)
\end{align*}
for sufficiently small $\lambda$ and $\epsilon$.
So for sufficiently large $n$ this quantity is larger than $m = O(n)$, and Theorem \ref{main} applies to $F_k(n,m,c,p)$.
Therefore with high probability, an unsatisfiable instance from $F_k(n, m, c, p)$ does not have a resolution refutation of size $2^{O(x(n)t(n))} = 2^{O(x(n)^2 n / 2)} = 2^{O\left( n^\lambda \right)}$.
\qed
\end{proof}

\thmBoundKThree*
\begin{proof}
We proceed along the lines of the previous theorem, except that we set $t(n) = s(n) / z(n) = x(n)^{1 - 1 / 2y}n^{1/2}/z(n) = x(n)^{\frac{1 - \epsilon}{4}}n^{1/2} / z(n)$ and show that $t(n) \le x(n) \cdot n / 2$ for sufficiently large $n$.

Fix some $\gamma, \epsilon > 0$ which we will require to be sufficiently small later.
Letting $z(n) = n^\gamma$, clearly $z(n) = \omega(1)$.
Also define
\[
x(n) = \left( n^{-\gamma - \alpha \frac{5-\epsilon}{1-\epsilon}} \right)^{\frac{4}{5-\epsilon}} .
\]
Then
\[
t(n) = \left( n^{-\gamma - \alpha \frac{5-\epsilon}{1-\epsilon}} \right)^{\frac{4}{5-\epsilon} \cdot \frac{1-\epsilon}{4}} n^{1/2} / z(n) = n^{\frac{1}{2} -\gamma \left( 1 + \frac{1-\epsilon}{5-\epsilon} \right) - \alpha} ,
\]
which is $\omega(1)$ for sufficiently small $\gamma$ since $\alpha < 1/10$.
Also note that
\[
x(n) t(n) = x(n)^{\frac{5-\epsilon}{4}} n^{\frac{1}{2} - \gamma} = n^{\frac{1}{2} - 2\gamma - \alpha \frac{5-\epsilon}{1-\epsilon}} = n^\lambda
\]
where $\lambda = \frac{1}{2} - 2\gamma - \alpha \frac{5-\epsilon}{1-\epsilon}$.
Again since $\alpha < 1/10$, we have $\lambda > 0$ for sufficiently small $\gamma$ and $\epsilon$.
Similarly,
\[
x(n)^{-1/y} = \left( n^{-\gamma - \alpha \frac{5-\epsilon}{1-\epsilon}} \right)^{- \frac{4}{5-\epsilon} \cdot \frac{3+\epsilon}{2}} = n^{2\gamma \frac{3+\epsilon}{5-\epsilon} + 2\alpha \frac{3+\epsilon}{1-\epsilon}} = o(n)
\]
for sufficiently small $\gamma$ and $\epsilon$.
Therefore
\[
t(n) = x(n) \left( x(n)^{-1/y} \right)^{1/2} n^{1/2} / z(n) = o(x(n) n) .
\]
So for sufficiently large $n$ we have satisfied the conditions of Theorem \ref{main}.
Now observe that
\begin{align*}
\frac{1}{2^{21 / 2} (1 + \epsilon)c^{2}}x(n)^{-(1 - \epsilon) / 2}n &= \Omega \left(\frac{1}{n^{2\alpha}} \left( n^{-\gamma - \alpha \frac{5-\epsilon}{1-\epsilon}} \right)^{-\frac{4}{5-\epsilon} \cdot \frac{1-\epsilon}{2}} n \right) \\
&= \Omega \left( n^{1 + 2\gamma \frac{1-\epsilon}{5-\epsilon}} \right) \\
&= \omega(n)
\end{align*}
since $\gamma > 0$ and we may take $\epsilon < 1$.
So for sufficiently large $n$ this quantity is larger than $m = O(n)$, and Theorem \ref{main} applies to $F_3(n,m,c,p)$.
Therefore with high probability, an unsatisfiable instance from $F_3(n,m,c,p)$ does not have a resolution refutation of size $2^{O(x(n) t(n))} = 2^{O\left(n^\lambda\right)}$.
\qed
\end{proof}

\thmModifiedModel*
\begin{proof}
Modify Algorithm \ref{mod_levy} to use the new definition of bridge clause, store in $b$ the new bridge clause probability $1 - h$, and sample from $\widetilde{F}_k(n,m,c,p)$ on line \ref{line:sample-psi}.

Now we check that each lemma is still true.
For Lemma \ref{lemma:well-defined}, observe that removing the constraint that every variable in a bridge clause must come from a different community cannot decrease the probability that a random clause is a bridge clause.
So our new value of $b$ is at least as large as the old, and therefore $h/b$ is still at most $1$.
Also $\widetilde{F}_k(n,m,c,p)$ is a distribution over $k$-CNF formulas with $n$ variables and $m$ clauses, so Lemma \ref{lemma:well-defined} holds.

For Lemma \ref{lemma:equiv}, when the modified Algorithm \ref{mod_levy} returns from line \ref{line:failure} the formula $\psi$ is drawn from $\widetilde{F}_k(n,m,c,p)$.
So in this case $\overline{F}_k(n,m,c,p;m')$ is trivially identical to $\widetilde{F}_k(n,m,c,p)$, and we need only consider the case when the algorithm returns from line \ref{line:success}.
As above each clause of $\phi$ is added independently, so we need only calculate the probability that an added clause is within a community.
Proceeding exactly as in Lemma \ref{lemma:equiv}, we obtain $\Pr[\cadded | A] = \Pr[\ccomm] = h$ and $\Pr[\cadded | \overline{A}] = (h/b) \Pr[\cbridge] = h$ (since we changed $b$ to be the probability of getting a bridge clause under the new definition).
So
\[
\Pr[\cadded] = \Pr[\cadded | A] \Pr[A] + \Pr[\cadded | \overline{A}] \Pr[\overline{A}] = hp + h(1-p) = h,
\]
and therefore
\[
\Pr[\ccomm | \cadded] = \Pr[ A | \cadded] = \frac{\Pr[\cadded | A] \Pr[A]}{\Pr[\cadded]} = \frac{hp}{h} = p
\]
as before.

For Lemma \ref{lemma:success}, note that the probability that $C$ is a bridge clause is the new value of $b$.
So as before the probability that $C$ is added to $\psi$ is $p \cdot h + (1-p) \cdot (h/b) \cdot b = h$.
The rest of the computation then proceeds without change.

Finally, the argument for Lemma \ref{lemma:inherit} goes through with no changes.
So subformula-inherited properties of random $k$-{\SAT} instances are passed on to instances of $\widetilde{F}_k(n,m,c,p)$ with high probability.
Therefore the Beame--Pitassi argument in Section \ref{sec:beame-pitassi} and the CDCL runtime bound in Section \ref{sec:cdcl-runtime} hold for $\widetilde{F}_k(n,m,c,p)$ with no further modifications needed.
\qed
\end{proof}

\end{document}